\def\vgap{\vspace{0.1 in}}
\newtheorem{theorem}{Theorem}
\newtheorem{lemma}{Lemma}
\newtheorem{observation}{Observation}
\newtheorem{remark}{Remark}
\def\S{\mathcal{S}}
\def\Q{\mathcal{Q}}
\def\rep{\mathit{rep}}
\def\capp{\mathit{capp}}
\def\eapp{\mathit{\vare app}}
\newcommand{\IR}{\mathbb{R}}
\newcommand{\Oe}{O_{\vare}}
\newcommand{\vare}{\varepsilon}
\newcommand{\define}{\textit}
\newcommand{\IntRange}[1]{\left\llbracket #1 \right\rrbracket}
\begin{document}

\title{Approximate Range Counting Revisited\footnote{This research was partly supported by a Doctoral Dissertation Fellowship (DDF) from the Graduate School of University of Minnesota.}}

\date{}

\author{Saladi Rahul\\
Department of Computer Science and Engineering \\
University of Minnesota \\
{\em sala0198@umn.edu}}

\maketitle

\begin{abstract}
We study range-searching for colored objects, where one has to count
(approximately) the number of colors present in a query range.
The problems studied mostly involve orthogonal range-searching in two
and three dimensions, and the dual setting of rectangle stabbing by
points.
We present optimal and near-optimal solutions for these problems.
Most of the results are obtained via reductions to the approximate
uncolored version, and improved data-structures for them.
An additional contribution of this work is the introduction of nested
shallow cuttings.
\end{abstract}

\section{Introduction}
\begin{wrapfigure}{r}{0.25\textwidth}
\includegraphics[scale=1]{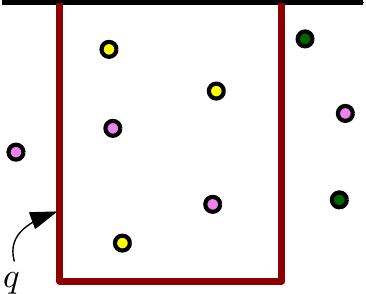}
\label{fig:colored-problems}
\end{wrapfigure}
 Let $S$ be a set of $n$ 
geometric objects in $\IR^d$ which are segregated into disjoint groups (i.e., {\em colors}). 
Given a query $q\subseteq \IR^d$, a color $c$  {\em intersects (or is present in)} $q$ if any object
in $S$ of color $c$ intersects $q$, and let $k$  be the number of colors of $S$ present in $q$. In the 
{\em approximate colored range-counting problem}, the task is to preprocess $S$ into a  data structure, so that 
for a query $q$, one can efficiently report the {\em approximate} number of colors present in $q$. 
Specifically, return any value  in the range \\ $[(1-\vare)k,(1+\vare)k]$, where $\vare \in (0,1)$ 
is a pre-specified parameter.

 Colored range searching and its related problems have 
been  studied before 
\cite{kn11,nv13,ptsnv14,m02,lps08,ksv06,gjs04,bkmt95,agm02,gjs95,jl93,krsv07,lp12,lw13,n14,sj05}. 
They  are known as GROUP-BY queries 
in the database literature. A popular variant is the \define{colored orthogonal range searching} problem: 
$S$ is a set of $n$ colored points in $\IR^d$, and $q$ is an axes-parallel rectangle. 
As a motivating example for this problem, consider the following query:
  ``How many countries have employees aged between $X_1$ and $X_2$ while earning annually more than 
  $Y$ rupees?". An employee is represented as a colored point $(age, salary)$, 
  where the color encodes the country, and the query is the axes-parallel 
  rectangle $[X_1,X_2] \times [Y,\infty)$.


\subsection{Previous work and background}

In the {\em standard} approximate range counting problem 
there are no colors. One is  interested in the approximate number of 
objects intersecting the query. Specifically, 
if $k$ is the number of objects of $S$ intersecting $q$,
then return a value  in the range 
$[(1-\vare)k,(1+\vare)k]$.

\paragraph{$\vare$-approximations.} 
In the {\em additive-error $\vare$-approximation}, a set $Z \subseteq S$ is picked such that, 
given a query $q$, we {\em only} inspect $Z$ and return a value which lies in the range 
$[k-\vare n,k+\vare n]$. 
Vapnik and Chervonenkis~\cite{vc71} proved that 
a random sample $Z$ 
 of size $O(\frac{\delta}{\vare^2}\log\frac{\delta}{\vare})$ provides an 
$\vare$-approximation with good probability, where 
 $\delta$ is the  VC-dimension ($\delta$ is usually a constant).

\paragraph{Relative $(p,\vare)$-approximation.} 
 Har-Peled and Sharir~\cite{hs11}
 introduced the notion of {\em relative $(p,\vare)$-approximation} for geometric settings.
The goal is to pick a {\em small} set $Z \subset S$ which can be used to compute a relative approximation for queries with large value of $k$. Formally, 
given a parameter $p\in (0,1)$, a set $Z \subset S$ is a relative $(p,\vare)$-approximation if:
\begin{align*}
|Z\cap q|\cdot \frac{n}{|Z|} &\in
\begin{cases}
 [(1-\vare)k,(1+\vare)k]  \quad \quad \text{if } k\geq pn \\
 [k-\vare pn,k+\vare pn]  \quad \quad \text{otherwise. }
\end{cases}
\end{align*}
Har-Peled and Sharir
prove that a sample $Z$ from $S$ of size $O\left(\frac{1}{\vare^2 p}\left(\delta\log\frac{1}{p} + \log\frac{1}{q} \right)\right)$ will succeed with probability at least $1-q$.

Har-Peled and Sharir construct relative $(p,\vare)$-approximations for point sets 
and halfspaces in $\IR^d$, for $d\geq 2$, 
and use them to answer approximate counting  for any query which contains more than $pn$ points. A nice feature of these results is that they 
are {\em sensitive} to the value of $k$. Specifically, the larger the value of $k$ is, the faster the query is answered. The intuition is that the larger 
the value of $k$ is, the larger is the error the query is allowed to make and hence, a smaller sample suffices. 
Even though relative $(p,\vare)$-approximations give a relative approximation 
only for queries with large values of $k$, Aronov and Sharir~\cite{as10}, and Sharir and Shaul~\cite{ss11} incorporated them into 
data structures which give an approximate count for all values of $k$.

\paragraph{General reduction to companion problems.}
Aronov and Har-Peled~\cite{ah08}, and Kaplan, Ramos and Sharir~\cite{krs11} 
presented general techniques to answer  approximate range counting queries. 
In both instances, the authors reduce the task of answering an
approximate counting query, into answering a few queries in
data-structures solving an easier {\em (companion)} problem.
Aronov and Har-Peled's companion problem is the emptiness query, 
where the goal is to report whether $|S\cap q|=0$.
Specifically, assume that there is  a data structure of size $S(n)$ which  answers the emptiness query 
in $O(Q(n))$ time. Aronov and Har-Peled show that there is a data structure of size 
$O(S(n)\log n)$ which answers the approximate counting query in $O(Q(n)\log n)$ time (for simplicity we 
ignore the dependency on $\vare$). 
Kaplan {\em et al.}'s companion problem is the  
 range-minimum query, where each object of $S$ has a weight associated with it and the goal is to report the 
object in $S\cap q$ with the minimum weight.

Even though the reductions of \cite{ah08} and \cite{krs11} 
seem different, there is an interesting discussion in Section~$6$ of \cite{ah08} about the underlying ``sameness" of both 
techniques.

\paragraph{Levels.} 
 Informally, for a set $S$ of $n$ objects, a {\em $t$-level} of $S$ is a 	surface 
 such that if a point $q$ lies 
above (resp., on/below) the surface, then the number of objects of 
$S$ containing $q$ is $>t$ (resp., $\leq t$).
Range counting can be reduced in some
cases to deciding the level of a query point. Unfortunately, the
complexity of a single level is not well understood. For example, for hyperplanes in the
plane, the $t$-level has super-linear complexity
$\Omega(n 2^{\sqrt{\log t}})$~\cite{t00} in the worst-case (the known upper bound is
$O(n t^{1/3})$~\cite{d98} and closing the gap is a major open problem). 
In particular, the prohibitive
complexity of such levels makes them inapplicable for the approximate
range counting problem, where one shoots for linear (or near-linear)
space data-structures. 

\paragraph{Shallow cuttings.}  
A \emph{$t$-level shallow cutting} is a set of simple cells, that lies
strictly below the $2t$-level, and their union  covers all
the points below (and on) the $t$-level. For many geometric objects in two and three dimensions,
such $t$-shallow cuttings have $O(n/t)$ cells~\cite{aes99}. Using such cuttings
leads to efficient data-structures for approximate range counting. Specifically, one uses binary search on a ``ladder'' of
approximate levels (realized via shallow cuttings) to find the
approximation.

For halfspaces in $\IR^3$, Afshani and Chan~\cite{ac09} avoid doing the binary search and 
find the two consecutive levels in  optimal $O(\log \frac{n}{k})$ expected time.
Later, Afshani, Hamilton and Zeh \cite{ahz10} obtained a worst-case optimal 
solution for many geometric settings. 
Interestingly, their results hold in the pointer machine model, the I/O-model and the cache-oblivious model.
However, in the word-RAM model their solution is not optimal and the query time is 
$\Omega(\log\log U + (\log\log n)^2)$. 

\vgap
\noindent
{\bf Specific problems.} Approximate counting for orthogonal range searching in $\IR^2$ was studied by Nekrich \cite{n14}, and Chan and Wilkinson \cite{cw13} 
in the word-RAM model. In this setting, the input set is points in $\IR^2$ and the query is a rectangle in $\IR^2$. 
A hyper-rectangle in $\IR^d$ is  {\em $(d+k)$-sided} 
if it is bounded on both sides in $k$ out of the $d$ dimensions  and 
unbounded on one side in the remaining $d-k$ dimensions.
Nekrich \cite{n14} presented a data structure for approximate colored 
$3$-sided range searching in $\IR^2$, where the input is points 
and the query is a $3$-sided rectangle in $\IR^2$.
However, it has an approximation 
factor of $(4+\vare)$, whereas we are interested in obtaining a 
tighter approximation factor of $(1+\vare)$. To the best of our 
knowledge, this is the only work directly addressing  an approximate 
colored counting query.


\subsection{Motivation}

\paragraph{Avoiding expensive counting structures.} 
A search problem is 
decomposable if given two disjoint sets of objects $S_1$ 
and $S_2$, the answer to $F(S_1\cup S_2)$ can be computed 
in constant time, given the answers to $F(S_1)$ and $F(S_2)$ 
separately.
This property is widely used in the literature~\cite{ae98} for counting 
in standard problems (going back to the work of Bentley and Saxe~\cite{bs80} in the late 1970s). 
For colored counting problems, however, $F(\cdot)$ is not 
decomposable. If $F(S_1)$ (resp. $F(S_2)$) has $n_1$ (resp. $n_2$) 
colors, then this information is insufficient to compute 
$F(S_1 \cup S_2)$, as they might have  common  colors.

As a result, for {\it many exact} colored counting queries the known space and query time bounds are  expensive.
For example, for colored orthogonal  range searching problem in $\IR^d$,  existing structures use $O(n^d)$ space to achieve polylogarithmic query time~\cite{krsv07}. 
Any substantial improvement in the preprocessing time {\em and}  the query time  would lead to a substantial improvement in the best exponent of matrix multiplication \cite{krsv07} 
(which is a major open problem). Similarly,  counting structures  
for colored halfspace counting in $\IR^2$ and $\IR^3$~\cite{gjs04}
are expensive.

Instead of an exact count, if one is willing to settle for an approximate count, then this work presents a data structure
with  $O(n \text{ polylog } n)$ space  and  $O(\text{polylog } n)$ query time.

\paragraph{Approximate counting in the speed of emptiness.} In an emptiness query, 
the goal is to decide if $S\cap q$ is empty. The approximate counting query is at least as hard as 
the emptiness query: When $k=0$ and $k=1$, no error is tolerated. 
Therefore, a natural goal while answering approximate range counting queries is to 
match the bounds of its corresponding {\em emptiness query}.

\subsection{Our results and techniques}

\subsubsection{Specific problems}

The  focus of the paper is building data structures for  approximate colored counting queries, 
 which exactly match or {\em almost} match the 
bounds of their corresponding emptiness problem.

\paragraph{$3$-sided rectangle stabbing in 2d and related problems.} 
In the colored interval stabbing problem,  the input is $n$ colored intervals  with endpoints in $\IntRange{U} = \{1,\ldots, U\}$,
and the query is a point in $\IntRange{U}$. We present a linear-space data structure which answers the approximate counting query 
in $O(\log\log U)$ time.
The new data structure can be used to handle some geometric settings in 2d: 
the  {\em colored dominance search} (the input is a set of $n$ points, and the query is a $2$-sided rectangle) and 
the {\em colored $3$-sided rectangle  stabbing} (the input is a set of $n$ 
$3$-sided rectangles, and the query is a point). The results are summarized in Table~\ref{table:results}. 

\paragraph{Range searching in $\IR^2$.}  
The input is a set of $n$ colored points in the plane.
For  $3$-sided query rectangles, an {\em optimal} solution (in terms of $n$) for approximate counting is obtained.
For $4$-sided query rectangles,  an {\em almost-optimal} solution for approximate counting is obtained. 
The size of our data structure is off by a factor of $\log\log n$ w.r.t. its corresponding emptiness structure 
which occupies 
 $O(n\frac{\log n}{\log\log n})$ space and answers the emptiness query 
   in $O(\log n)$ time~\cite{c86}. The results are summarized in Table~\ref{table:results}.

\paragraph{Dominance search in $\IR^3$.} The input  is a set of $n$ colored points in $\IR^3$ and the query is a 
$3$-sided rectangle in $\IR^3$ (i.e., an octant). An almost-optimal solution is obtained 
requiring  $O(n\log\log n)$ space and $O(\log n)$ time to answer the approximate counting query. 

\begin{table}[t]
\begin{tabular}{|c|c|c|c|c|c|}
\hline
Dime- & Input, & New Results& Previous Approx.& Exact Counting & Model\\
-nsion & Query &  &  Counting Results& Results&\\
\hline\hline
$1$ & intervals, & S: $n,$ & S: $n,$& S: $n,$&\\
 & point & Q: $\log\log U$ & Q: $\log\log U + $& Q: $\log\log U+ \log_wn$& WR\\
 \cline{1-2}
$2$ & points, &  & $(\log\log n)^2$& &\\
& $2$-sided rectangle & && &\\
\cline{1-2}
$2$ & $3$-sided rectangles, & Theorem~\ref{thm:many-colored}    & Remark~\ref{rem:ac-many} & Remark~\ref{rem:ec-many}&\\ 
& point &  & & &\\
\hline\hline
& & & & &\\
$2$ & points, & S: $n,$ & S: $n\log^2n,$ &  &\\
    & $3$-sided rectangle & Q: $\log n$ & Q: $\log^2n$& not studied& PM\\
    &                     & Theorem~\ref{thm::three-sided-color}(A) & Remark~\ref{rem:ac-3-sided}& &\\ 
\hline
& & & & &\\
$2$ & points, & S: $n\log n,$ & S: $n\log^3n,$& S: $n^2\log^6n,$ &\\
    & $4$-sided rectangle & Q: $\log n$ & Q: $\log^2n$& Q: $\log^7n$ & PM\\
    & &  Theorem~\ref{thm::three-sided-color}(B) & Remark~\ref{rem:ac-3-sided} & Kaplan {\em et al.}~\cite{krsv07} &\\
\hline\hline
& & & & &\\
$3$ & points, & S: $n\log^{*}n,$ & S: $n\log^2n,$&  &\\
    & $3$-sided rectangle  & Q: $\log n\cdot\log\log n$ & Q: $\log^2n$& not studied& PM\\
       & & Theorem~\ref{thm:3d-dom} & Remark~\ref{rem:ac-3d-dom} &  &\\
\hline
\end{tabular}
\caption{A summary of the results obtained for several approximate colored counting queries.
To avoid clutter, the $O(\cdot)$ symbol and the dependency on $\vare$ is not shown in 
 the space and the query time bounds.
For the second column in the table, the first entry is the input and the second entry is 
the query. 
For each of the results column in the table, the first entry is the space occupied by the data structure and the second 
entry is the time taken to answer the query. WR denotes the word-RAM model and PM denotes 
the pointer machine model.}
\label{table:results}
\end{table}

\vgap
For the sake of completeness, in Section~\ref{sec:appl-sec-red} we present results for a 
couple of other colored problems which have  expensive exact counting structures.

\subsubsection{General reductions}

We present two general reductions for solving approximate colored counting queries
by reducing them to ``easy" companion queries (Theorem~\ref{thm::main-1} and Theorem~\ref{thm:accq}).

\vgap
\noindent
{\bf Reduction-I (Reporting + $C$-approximation).}
 In the first reduction a colored approximate counting query is answered 
 using two companion structures: (a) {\em reporting structure} (its objective 
is to report the $k$ colors), and  (b) {\em $C$-approximation structure} (its objective is to report any value $z$ s.t. 
$k \in [z,Cz]$, where $C$ is a constant).
Significantly, unlike previous reductions~\cite{ah08,krs11},
there is {\em no asymptotic loss} of efficiency in space and query time bounds w.r.t. to 
the two companion problems. 

\vgap
\noindent
{\bf Reduction-II (Only Reporting).} The second reduction 
 is a modification of the Aronov and Har-Peled~\cite{ah08} reduction. 
We present the reduction  for the following reasons: 
 \begin{compactenum}[a)]
    \item Unlike reduction-I, this reduction is ``easier" to use 
    since it uses only the reporting structure and avoids  
     the $C$-approximation structure. 
    \item The analysis of Aronov and Har-Peled is slightly complicated because of their insistence 
on querying emptiness structures. We show that by using reporting structures the analysis 
becomes simpler. This reduction is useful when the  reporting query is not significantly 
costlier than the emptiness query.
\end{compactenum}

\subsubsection{Our techniques}
The results are obtained  via a non-trivial combination of several techniques. 
For example, 
(a) new reductions from colored problems to standard problems, 
(b) obtaining a linear-space data structure by performing  random sampling on a super-linear-size data structure, 
(c) refinement of path-range trees of Nekrich~\cite{n14} to obtain an optimal data structure for 
$C$-approximation of colored $3$-sided range search in $\IR^2$, and
(d) {\em random sampling on colors} to obtain the two general reductions.

In addition, we introduce  {\em nested shallow cuttings} 
for $3$-sided rectangles in 2d.
The idea of using a hierarchy of cuttings (or samples) is, of course,
not new. However, for this specific setting, we get a hierarchy where
there is no penalty for the different levels being compatible with
each other. Usually, cells in the lower levels have to be clipped to
cells in the higher levels of the hierarchy, leading to a degradation
in performance. In our case, however, cells of the lower levels are
fully contained in the cells of the level above it.

\paragraph{Paper organization.} In Section~$2$, we  present a solution to the 
colored $3$-sided rectangle stabbing in 2d problem. 
In Section~$3$ we present a solution to the colored dominance search in $\IR^3$ problem. 
In Section~$4$ and $5$, the two general reductions are presented.
In Section~$6$, the application of the first reduction to colored orthogonal range search in $\IR^2$ problem is 
shown. In Section~$7$, applications of the second reduction is shown.
Finally, we conclude in Section~$8$.

\section{$3$-sided Rectangle Stabbing in 2d}\label{sec:many-colored}

The goal of this section is to prove the following theorem.
\begin{theorem}\label{thm:many-colored}
Consider the following three colored geometric settings:
\begin{enumerate}
\item {\bf Colored interval stabbing in 1d}, where the  input is a set $S$ of $n$ colored intervals in one-dimension
and the query $q$ is a point. The endpoints of the intervals and the query point lie on a grid $\IntRange{U}$. 
\item {\bf Colored dominance search in 2d},  where the input is a set $S$ of $n$ colored points in 2d and the query $q$ is a quadrant 
of the form $[q_x,\infty) \times [q_y,\infty)$. 
The input points and the point  $(q_x,q_y)$ lie on a grid $\IntRange{U} \times \IntRange{U}$. 
\item {\bf Colored $3$-sided rectangle  stabbing in 2d}, where the  input is a set $S$ of $n$ colored 
$3$-sided rectangles in 2d and the query $q$ is a point. The endpoints of the rectangles 
and the query point lie on a grid $\IntRange{U} \times \IntRange{U}$.
\end{enumerate}
Then there exists an $\Oe(n)$ size word-RAM data structure which can answer an approximate counting query 
for these three  settings in $\Oe(\log\log U)$ time. The notation $\Oe(\cdot)$ hides the dependency on $\vare$.
\end{theorem}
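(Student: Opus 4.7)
The plan is to establish setting~(2) (colored dominance in $\IR^2$) as the technical core, and to reduce settings~(1) and~(3) to it. Setting~(1) embeds into setting~(2) via the classical bijection mapping an interval $[a,b]$ of color $c$ to the point $(-a, b)$ of color $c$, and a stabbing query $q$ to the dominance query $[-q, \infty) \times [q, \infty)$. Setting~(3), in which a $3$-sided rectangle $[x_1, x_2] \times [y_1, \infty)$ is stabbed by $(q_x, q_y)$, reduces to setting~(2) by sweeping on $q_y$ and maintaining, via partial persistence, a colored dominance structure on the $x$-intervals of all rectangles whose $y_1 \le q_y$ (each such interval in turn represented as a 2D point under the above bijection).

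For setting~(2), my plan is to combine a two-dimensional sweep-line precomputation with $y$-fast tries over $\IntRange{U}$. Sorting the input points by $y$ and then by $x$, the universe $\IntRange{U} \times \IntRange{U}$ partitions into at most $O(n^2)$ cells, each of which has a fixed set of dominated points and hence a fixed distinct-color count. Precomputing these counts and indexing the cells by $y$-fast tries on the $x$- and $y$-coordinates yields an $\Oe(\log\log U)$-time query, at the cost of $O(n^2)$ space.

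To bring the space down to $\Oe(n)$, I would apply the ``random sampling on a super-linear-size structure'' idiom hinted at in the introduction: sub-sample the $O(n^2)$ precomputed cells down to $\Oe(n)$ representatives, and estimate the count at any discarded cell from its nearest retained cell, inflated by the sampling-rate factor. Relative $(p,\vare)$-approximation of Har-Peled and Sharir guarantees $(1 + \vare)$-accuracy when $k$ is large; for small $k$, a companion colored-reporting structure --- built via Gupta--Janardan--Smid chaining (for each colored point, linking to its nearest same-color predecessor to convert a colored dominance count into a 3D dominance count) plus word-RAM $y$-fast tries --- provides an exact answer in $\Oe(n)$ space and $\Oe(\log\log U + k)$ time. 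The two regimes combine in a single dispatch at query time.

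The chief obstacle is guaranteeing uniform $(1 + \vare)$-accuracy over all $(q_x, q_y)$ and all values of $k$, while keeping the query at a pure $\Oe(\log\log U)$ (with no extra $\log\log n$ factor as in previous work). My plan to overcome this is to organize the samples into a geometric hierarchy at rates $2^{-i}$ and to fuse the level-selection directly into the $y$-fast-trie substrate, so that the correct level $i^*$ is located by a single predecessor-style lookup rather than a binary search across the $O(\log n)$ levels. Verifying that this fusion preserves linear space and yields the claimed accuracy for every $k$ --- and propagates cleanly through the partial-persistence reduction used for setting~(3) --- is the technical heart of the proof.
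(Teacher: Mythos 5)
Your plan diverges fundamentally from the paper's, and the divergence creates a gap that the proposal does not close. The paper's key move, which you never make, is an \emph{exact} (lossless) reduction from the colored problem to the \emph{standard uncolored} $3$-sided rectangle stabbing problem in 2d: for each color class $S_c$, take the union $\mathcal{U}(S_c)$ of its rectangles and vertically decompose it into $\Theta(|S_c|)$ pairwise-disjoint $3$-sided rectangles $\mathcal{N}(S_c)$. A query point stabs exactly one rectangle of $\mathcal{N}(S_c)$ if color $c$ is present and none otherwise, so $|\mathcal{N}(S)\cap q| = k$ exactly, with $|\mathcal{N}(S)| = O(n)$. Settings (1) and (2) are then just degenerate instances of (3). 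After this reduction the problem is uncolored, and the paper can deploy nested shallow cuttings (whose cells nest across levels, so no clipping penalty), a $(t,t')$-level ladder searched by fusion trees, and a word-RAM shared lookup table for the $k \leq \sqrt{\log n}$ regime (enabled by the fact that rank-reduced conflict lists of size $\sqrt{\log n}$ have only $O(\sqrt{n})$ combinatorial types). Together these give $O(1)$ query time on the $\IntRange{2n}\times\IntRange{n}$ grid, plus $O(\log\log U)$ for two van Emde Boas predecessor searches.

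Your proposal, by contrast, stays entirely within colored machinery, and that is where it breaks. Your small-$k$ regime uses a colored reporting structure with query time $\Oe(\log\log U + k)$, which is $\omega(\log\log U)$ as soon as $k$ exceeds $\log\log U$; but your sampling regime cannot kick in until $k = \Omega(\vare^{-2}\log n)$ (below that, Chernoff-type bounds do not give a union bound over the $n^{O(1)}$ combinatorially distinct queries). Nothing in the proposal covers $k$ between $\log\log U$ and $\vare^{-2}\log n$ in $O(\log\log U)$ time. This is precisely the barrier the paper hits with its own color-sampling reductions (Reduction-I and Reduction-II), which is why those yield query times of order $\vare^{-2}\log n$ rather than $\log\log U$, and why for this theorem the paper abandons color sampling entirely in favor of the lossless uncolored reduction. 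Two further points: Har-Peled--Sharir relative approximations concern counting \emph{objects}, not distinct colors (colored counting is not decomposable, as the paper stresses), so invoking them for colored dominance requires an argument you have not given; and ``sub-sampling the $O(n^2)$ precomputed cells'' is not the same as sampling the underlying points or colors, and carries no approximation guarantee. The final paragraph, ``fuse the level-selection directly into the $y$-fast-trie substrate,'' is where the actual work would have to go, and it is exactly what nested shallow cuttings plus the shared table accomplish in the paper; as written it is a placeholder, not a proof.
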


Our strategy for proving this theorem is the following: 
In Subsection~\ref{subsec:trans}, we present a transformation of these three colored 
 problems to the {\em standard} $3$-sided rectangle stabbing in 2d problem. 
Then in Subsection~\ref{subsec:standard}, we construct nested shallow cuttings and use them 
to solve the standard $3$-sided rectangle stabbing in 2d problem.

\subsection{Transformation to a standard problem}\label{subsec:trans}
From now on the focus will be on
colored $3$-sided rectangle stabbing in 2d problem, 
since the geometric setting of (1) and 
(2) in Theorem~\ref{thm:many-colored} are its special cases. 
We present a transformation of the  colored $3$-sided rectangle 
stabbing in 2d problem to the  {\em standard} $3$-sided rectangle 
stabbing in 2d problem. 


Let $S_c \subseteq S$ be the set of $3$-sided rectangles of a color $c$. In the preprocessing phase, we perform the 
following steps: (1) Construct a union of the rectangles of $S_c$. Call it ${\cal U}(S_c)$. 
(2) The vertices of ${\cal U}(S_c)$ include original vertices of $S_c$ and some new vertices.
Perform a {\it vertical decomposition} of ${\cal U}(S_c)$ by shooting a vertical ray 
upwards from every {\em new} vertex of ${\cal U}(S_c)$ till it hits $+\infty$. This leads to a 
decomposition of ${\cal U}(S_c)$ into $\Theta(|S_c|)$ pairwise-disjoint $3$-sided rectangles. 
Call these new set of rectangles ${\cal N}(S_c)$.  

\begin{figure}[h]
 \centering
 \includegraphics[scale=1]{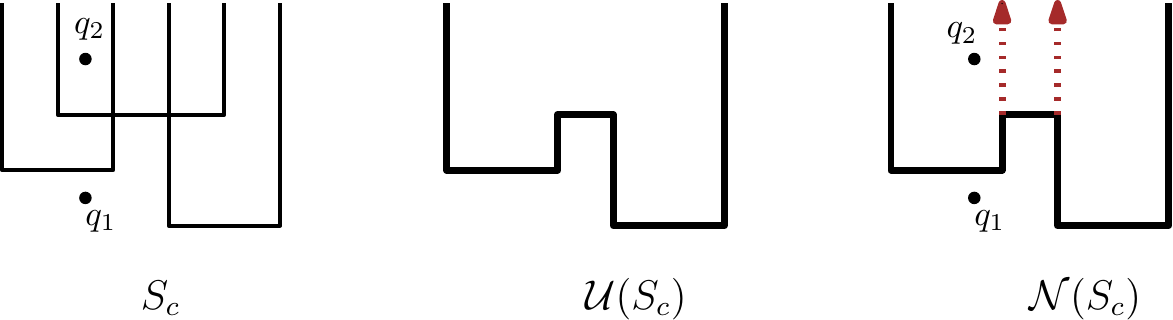}
  \label{fig:trans-I}
 \end{figure}

Given a query point $q$, 
we can make the  following two  observations: 
\begin{itemize}
\item If $S_c \cap q = \emptyset$, then ${\cal N}(S_c) \cap q = \emptyset$. 
See query point $q_1$ in the above figure.

\item  If $S_c \cap q \neq \emptyset$, then exactly one rectangle in ${\cal N}(S_c)$ is stabbed by $q$.
See query point $q_2$ in the above figure.
\end{itemize}

Let ${\cal N}(S)=\bigcup_{\forall c} {\cal N}(S_c)$, and  clearly, $|{\cal N}(S)|=O(n)$. 
Therefore, the colored $3$-sided rectangle stabbing in 2d problem  on $S$ has been reduced 
to   the {\it standard} $3$-sided rectangle stabbing in 2d problem on ${\cal N}(S)$.

\subsection{Standard $3$-sided rectangle stabbing in 2d}\label{subsec:standard}
In this subsection we will prove the  following lemma.
\begin{lemma}\label{lemma:rect-stab}
{\bf (Standard $3$-sided rectangle stabbing in 2d.)} In this geometric setting, the input is 
a set $S$ of $n$ uncolored $3$-sided rectangles of the form $[x_1,x_2] \times [y,\infty)$,
and the query $q$ is a point. The endpoints of the rectangles lie on a grid $\IntRange{U} \times \IntRange{U}$.
There exists a data structure of size $\Oe(n)$ which can answer 
an approximate counting query in $\Oe(\log\log U)$ time.
\end{lemma}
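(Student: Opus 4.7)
The plan is to build a hierarchy of \emph{nested} shallow cuttings for the $3$-sided rectangles, and combine them with a van Emde Boas--style point-location structure so that a single predecessor query on $\IntRange{U}$ drives the whole search.

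First I would build, for each integer $i=0,1,\ldots,\lceil \log_{1+\vare} n\rceil$, a $(1+\vare)^i$-shallow cutting $\mathcal{C}_i$ of $S$. Since every input rectangle has the form $[x_1,x_2]\times[y,\infty)$, the $t$-level is a staircase and each cell of $\mathcal{C}_i$ can be taken to be a $3$-sided rectangle $[a,b]\times(-\infty,c]$ whose conflict list has $O((1+\vare)^i)$ rectangles. Standard shallow-cutting bounds give $|\mathcal{C}_i|=O(n/(1+\vare)^i)$, so summing the geometric series across all levels yields $\Oe(n)$ total cells, which provides the $\Oe(n)$ space bound.

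Next I would engineer the $\mathcal{C}_i$'s to be \emph{nested}, in the sense that each cell of $\mathcal{C}_i$ is entirely contained in some cell of $\mathcal{C}_{i+1}$. This is the key structural advantage of the $3$-sided setting: because every cell is itself a $3$-sided region of the same orientation, a sweep-line construction that refines the staircase of level $i+1$ by inserting the next $(1+\vare)^i$ rectangles along the sweep produces cells of $\mathcal{C}_i$ that sit strictly inside those of $\mathcal{C}_{i+1}$, with no clipping needed. Thus, within every vertical slab at a fixed $x$-coordinate, the ceilings $c$ of the nested cells at successive levels are sorted in $y$, and the predicate ``$q\in$ some cell of $\mathcal{C}_i$'' is monotone in $i$.

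To query $q=(q_x,q_y)$, I want the smallest $i$ such that $q$ is not covered by any cell of $\mathcal{C}_i$; the approximate count is then $(1+\vare)^i$, which is within a $(1+O(\vare))$-factor of the true depth by the shallow-cutting guarantee. For the point location I would maintain a van Emde Boas tree on the $x$-coordinates of all cell boundaries of $\bigcup_i\mathcal{C}_i$, giving $O(\log\log U)$ predecessor access on $\IntRange{U}$. In each resulting $x$-slab I would store, in a second van Emde Boas structure keyed on the $y$-coordinate, the sorted sequence of ceilings $c$ of the nested cells present in that slab (one per level). A single predecessor query for $q_y$ in this per-slab structure returns the innermost cell that still contains $q$, and hence the level $i$.

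The main obstacle is avoiding the $O(\log\log n)$ overhead that a naive per-level binary search would incur, so that the final query time is a true $\Oe(\log\log U)$ independent of $n$. The whole construction therefore rests on two claims whose verification is the technical crux: (i) the sweep-line construction really does produce \emph{compatible} nested cuttings without blowing up the cell count beyond $\Oe(n)$; and (ii) the per-slab $y$-sequences inherited from different levels remain totally ordered, so that one predecessor call on $q_y$ suffices. Both claims rely on the special $[x_1,x_2]\times[y,\infty)$ shape, which is exactly where the ``no penalty for compatibility'' remark in the introduction is being cashed in.
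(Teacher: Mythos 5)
Your high-level intuition is right — nested shallow cuttings for the $3$-sided setting, plus a van Emde Boas predecessor structure to do the point location in $O(\log\log U)$ — and the observation that $3$-sided cells can be made \emph{genuinely} nested (no clipping) is exactly what the paper exploits. But the space analysis in the proposal does not close, and this is precisely the part the paper spends most of its effort on.

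The problem is in the per-slab $y$-structure. After you union the $x$-boundaries of all the $\mathcal{C}_i$'s, the finest slabs are determined by $\mathcal{C}_0$, so there are $\Theta(n)$ of them, and each slab is pierced by one cell from every level, i.e. $\Theta(\log_{1+\vare} n)$ ceiling values. Storing a $y$-sequence (vEB or otherwise) per slab therefore costs $\Theta(n\log_{1+\vare} n)$, not $\Oe(n)$. Summing the \emph{cell counts} $\sum_i n/(1+\vare)^i = \Oe(n)$ is fine, but that is not the quantity that bounds your space: you are replicating each coarse cell's ceiling across all the fine slabs it contains. The paper gets around this with two ideas you do not have. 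First, the $y$-ladder is anchored at the cells of a \emph{coarse} level $t$ rather than at the finest slabs, so a $(t,t')$-ladder costs only $O((n/t)\log t')$; taking $t=\log n$, $t'=n$ gives linear space for the top of the hierarchy. Second, this still leaves the shallow levels $t < \log n$ unaffordable, and there the paper switches to a word-RAM \emph{shared lookup table}: it rank-space-reduces each $\sqrt{\log n}$-level conflict list to $O(\sqrt{\log n}\log\log n)$ bits, observes there are only $O(\sqrt n)$ distinct reduced conflict lists, and precomputes exact answers for all of them, using fusion trees for the $O(1)$-time rank-space reduction at query time. A middle $(\sqrt{\log n},\log n)$-ladder glues the two regimes together. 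Without the shared-table/bit-packing idea (or some substitute for it), the shallow levels alone force super-linear space, so the lemma does not follow from the proposal as written.

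A smaller point: the proposal asks for nested cuttings at ratio $(1+\vare)$, but slabs of widths $(1+\vare)^i$ do not tile one another, so the ``refine the staircase'' sweep does not automatically give containment. The paper sidesteps this by building the cuttings at powers of $2$ (which do tile) to obtain a constant-factor approximation, and then invoking the machinery of Afshani, Hamilton and Zeh to sharpen the constant factor to $1+\vare$. If you want $(1+\vare)$ directly, you need to say how the slab boundaries at consecutive levels are made compatible.
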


By a standard rank-space reduction, 
 the rectangles of $S$  can be projected to a $\IntRange{2n} \times \IntRange{n}$ grid: 
 Let  $S_x$ (resp., $S_y$) be the list of  the $2n$ vertical (resp., $n$ horizontal) sides  
 of $S$ in increasing order of their $x-$ (resp., $y-$) coordinate value. 
Then each rectangle $r=[x_1,x_2] \times [y,\infty) \in S$ is projected to a rectangle 
 $[rank(x_1), rank(x_2)] \times [rank(y),\infty)$, where $rank(x_i)$ (resp., $rank(y)$) 
 is the index of $x_i$ (resp., $y$) in the list $S_x$ (resp., $S_y$). 
 Given a query point $q \in \IntRange{U} \times \IntRange{U}$, we can 
use the van Emde Boas structure~\cite{b77} to  perform a predecessor search on $S_x$ and $S_y$ in $O(\log\log U)$ time 
to find the position of $q$ on the $\IntRange{2n} \times \IntRange{n}$ grid. 
Now we will focus on the new setting 
and prove the following result.

\begin{lemma}\label{lemma:rect-stab-n}
For the standard $3$-sided rectangle stabbing in 2d problem, 
consider a setting where the rectangles   
have endpoints lying on a grid $\IntRange{2n} \times \IntRange{n}$.
Then there exists a data structure of size $\Oe(n)$ which can answer 
the approximate counting query in $\Oe(1)$ time.
\end{lemma}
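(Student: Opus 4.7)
The plan is to build a hierarchy of \emph{nested} shallow cuttings tailored to $3$-sided rectangles and to exploit both the nesting and the bounded universe $\IntRange{2n}\times\IntRange{n}$ to locate a query point in constant time. For each $i=0,1,\dots,L$ with $L=\Oe(\log n)$, I would set $t_i=\lceil(1+\vare)^i\rceil$ and construct a shallow cutting $\Lambda_i$: a collection of interior-disjoint $3$-sided cells of the form $[a,b]\times[c,\infty)$ whose union covers the depth-$\geq t_i$ region of the arrangement of $S$, so that every point inside a cell of $\Lambda_i$ is stabbed by between $t_i$ and $2t_i$ input rectangles, with $|\Lambda_i|=O(n/t_i)$. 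Because each input rectangle is itself a $3$-sided rectangle open upwards, the depth function is monotone in $y$, so the depth-$\geq t_i$ region is a staircase and admits a canonical vertical decomposition into cells of exactly the stated form.

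Next, I would establish the key structural lemma — the \emph{nesting property}: for every $i$, each cell of $\Lambda_i$ is entirely contained in a unique cell of $\Lambda_{i+1}$, not merely overlapping with it. This will follow from the monotonicity of depth in $y$ together with a canonical choice of $x$-breakpoints that are shared across all levels, so that coarser-level boundaries extend through finer-level cells without cutting them. With nesting in place, the sought approximation of $k$ for a query point $q$ is determined by the largest $i$ such that $q$ lies in some cell of $\Lambda_i$: for such $i$ the true count $k$ lies in $[t_i,2t_{i+1}]$, which is a $(1+\Theta(\vare))$-approximation that becomes the desired $(1\pm\vare)$ after re-parameterising $\vare$.

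Finally, I would implement the maximum-level point-location in $\Oe(1)$ time. The cells of $\Lambda_0\cup\dots\cup\Lambda_L$, together with parent pointers induced by the nesting, occupy $\sum_i O(n/t_i)=\Oe(n)$ space, so the space budget is met. For query time, I would collapse the entire hierarchy into a single planar refinement whose faces are labelled by the maximum level at which they are covered; thanks to nesting, this refinement has total size $\Oe(n)$ and is itself an orthogonal subdivision on the grid $\IntRange{2n}\times\IntRange{n}$, so standard word-RAM orthogonal point-location on a rank-space universe of size $O(n)$ answers the query in $\Oe(1)$ time. The main obstacle is exactly this last step: without the nesting, the per-level point-location structures would be independent, forcing either an extra logarithmic factor in space or a non-constant query time; establishing the nesting property — so that all levels share one combinatorial skeleton and the refined subdivision has linear size — is the technical heart of the proof.
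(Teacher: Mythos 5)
You correctly identify the central idea — nested shallow cuttings for $3$-sided rectangles — and the intuition that nesting lets the per-level point-location structures share one skeleton. However, several gaps prevent the proposal from working. First, the nesting property relies on consecutive levels differing by an \emph{integer} factor: in the paper's construction the width-$2t$ slabs are unions of two width-$t$ slabs, which fails for $t_i = (1+\vare)^i$. The paper only builds nested cuttings at powers-of-two levels to obtain a constant-factor approximation, and then invokes the black-box of Afshani, Hamilton and Zeh to boost to $(1+\vare)$. Second, the claim that ``standard word-RAM orthogonal point-location on a rank-space universe of size $O(n)$'' runs in $\Oe(1)$ time is the technical heart of the lemma, and it is \emph{not} standard: even in rank space, predecessor search over $\Theta(n)$ keys (a special case) is $\Omega(\log\log n)$ for linear space. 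The paper gets $O(1)$ per query through structure specific to the problem — the $x$-slab is found by a direct array index (since $q_x\in\IntRange{2n}$), and the $y$-search within a slab is a fusion-tree predecessor query over only $O(\log n)$ cell tops, which is $O(1)$.

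Third, there is a space issue with flattening the whole hierarchy: if the finest level has $t_0=O(1)$, there are $\Theta(n)$ finest slabs, each needing access to $\Theta(\log_{1+\vare} n)$ level boundaries to locate $q_y$, for $\Omega(n\log n)$ total (the number of \emph{faces} may be $\Oe(n)$, but a per-slab structure supporting the $y$-search is not). The paper avoids this by never descending below level $\sqrt{\log n}$: it uses a $(\sqrt{\log n},\log n)$ and a $(\log n,n)$ level-structure, both $O(n)$ space. Finally, you do not handle small $k$: when $k<1/\vare$ the approximation must be essentially exact, and the ladder alone gives only a multiplicative factor-$2$ bracket. The paper's $(\leq\sqrt{\log n})$-level shared table — an $O(\sqrt{n}\cdot\mathrm{polylog}\,n)$-entry lookup on rank-space-reduced conflict lists, using the bit-packing bound $2^{O(\sqrt{\log n}\log\log n)}=O(\sqrt n)$ — supplies the exact answer in this regime, and nothing in your proposal substitutes for it. (Also, a minor inversion: the cells should be of the form $[a,b]\times(-\infty,c]$ covering the \emph{shallow} region, and one seeks the \emph{smallest} $i$ with $q\in\Lambda_i$.)
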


\subsubsection{Nested shallow cuttings}

To prove Lemma~\ref{lemma:rect-stab-n},  we will first construct shallow cuttings for $3$-sided rectangles in 2d.
Unlike the general class of shallow cuttings, the shallow cuttings we construct
for $3$-sided rectangles will have a stronger property of cells in the lower level 
lying completely inside the cells of a higher level.

 \begin{lemma}
Let $S$ be a set of $3$-sided rectangles (of the form $[x_1,x_2] \times [y,\infty)$) 
whose endpoints lie on a $\IntRange{2n} \times \IntRange{n}$  grid. 
A {\em $t$-level shallow cutting} of $S$ produces a set ${\cal C}$ of 
 interior-disjoint $3$-sided rectangles/cells of the form $[x_1,x_2] \times (-\infty,y]$.
There exists a set ${\cal C}$ 
with the following three properties:
\begin{enumerate}
\item 
$|{\cal C}| = 2n/t$.
\item If $q$ does not lie inside any of  the cell in ${\cal C}$, then 
$|S\cap q| \geq t$.
\item Each cell in ${\cal C}$ intersects at most $2t$ rectangles of $S$.
\end{enumerate}
\end{lemma}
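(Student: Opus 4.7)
The plan is to build $\mathcal{C}$ by a straightforward column partition followed by a ``max-lift'' of each block. After the rank-space reduction, every column $i \in \IntRange{2n}$ carries exactly one rectangle endpoint ($x_1$ or $x_2$). For each column $i$ define $f(i)$ to be the smallest $y$ for which the point $(i,y)$ has level at least $t$, with the convention $f(i) = n+1$ if no such $y \leq n$ exists. I would partition $\IntRange{2n}$ into $\lceil 2n/t \rceil$ consecutive blocks $I_1, \ldots, I_k$ of size at most $t$, and for each $I_j = [a_j, b_j]$ add to $\mathcal{C}$ the single cell
\[
C_j \;:=\; [a_j, b_j] \times (-\infty,\, y^*_j], \qquad y^*_j \;:=\; \max_{i \in I_j}\bigl(f(i) - 1\bigr).
\]
Property~1 is then immediate, and Property~2 reduces to the observation that any $q = (q_x, q_y)$ with $|S \cap q| < t$ satisfies $q_y < f(q_x) \leq y^*_j + 1$ for the block $I_j$ with $q_x \in I_j$, hence $q \in C_j$.

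The heart of the argument is Property~3. I would bound the rectangles intersecting $C_j$ by splitting them into \textbf{Type~1} rectangles, which $x$-span the whole block (i.e.\ $x_1 \leq a_j$ and $x_2 \geq b_j$), and \textbf{Type~2} rectangles, which intersect $[a_j, b_j]$ without spanning it. A Type~2 rectangle must contribute at least one of $x_1, x_2$ to the interval $I_j$; since each column of $I_j$ carries exactly one endpoint, pigeonhole gives at most $|I_j| \leq t$ Type~2 rectangles. For Type~1, let $i^* \in I_j$ realize the maximum $y^*_j = f(i^*) - 1$. Every Type~1 rectangle satisfies $x_1 \leq i^* \leq x_2$ and $y \leq y^*_j$, so it contains the grid point $(i^*, y^*_j)$. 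Therefore the Type~1 count is at most $\mathrm{level}(i^*, y^*_j)$, which by the very definition of $f(i^*)$ is at most $t - 1$. Adding the two contributions yields at most $2t - 1 < 2t$ intersecting rectangles per cell, which is Property~3.

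The one subtlety I anticipate is the edge case $y^*_j = n$, which happens exactly when some column $i^* \in I_j$ has $f(i^*) = n + 1$, i.e., the level on column $i^*$ never reaches $t$ on the grid. In that case the Type~1 bound $\mathrm{level}(i^*, y^*_j) \leq t - 1$ becomes ``the total number of rectangles covering column $i^*$ is less than $t$'', which still holds because $\mathrm{level}(i^*, n) < t$ by the definition of $f(i^*)$. Thus the $(t - 1) + t$ accounting goes through uniformly and all three properties follow; the argument needs neither amortization nor any additional geometric machinery beyond the column-endpoint pigeonhole.
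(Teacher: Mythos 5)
Your proof is correct, and all three properties are established rigorously. However, the construction you use for the top of each cell is genuinely different from the paper's. The paper sets the cell's ceiling to $y_t$, the $t$-th smallest top edge among rectangles that fully span the slab (or $+\infty$ if fewer than $t$ span it), whereas you set it to $\max_{i \in I_j}(f(i)-1)$, where $f(i)$ is the lowest height at which column $i$ reaches level $t$. These give different cells in general: since all $t$ spanners with top edge $\leq y_t$ cover every column of the slab, one has $f(i) \leq y_t$ for all $i \in I_j$, so the paper's cell is always at least as tall as yours. With the paper's choice, Property~3 for spanners is immediate (exactly $t$ spanners have top $\leq y_t$, by definition), and Property~2 is also immediate (if $q_y > y_t$, the $t$ lowest spanners all contain $q$); your route instead goes through the level at the maximizing column $i^*$, which is a valid but more indirect argument, and requires the small extra bookkeeping you flag for the $f(i^*) = n+1$ edge case. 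The pigeonhole count of partial spanners is the same in both. One thing worth noting beyond this lemma: the paper's particular choice of $y_t$ is what makes the nested-cuttings Observation that follows go through cleanly; your construction would in fact also satisfy nestedness (since $f_t(i) \leq f_{2^it}(i)$ pointwise and the slabs are nested), but this is something you would need to check separately rather than getting ``for free.''
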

\begin{proof}
Partition the plane into $\frac{2n}{t}$ vertical slabs, 
such that  $t$ vertical lines of $S$ lie in each slab, i.e., 
each slab has a width of $t$. See Figure~\ref{fig:three-sided-rectangle}(a).
Consider a slab $s=[x_1,x_2] \times (-\infty, +\infty)$. 
Among all the rectangles of $S$ which completely span 
the slab $s$, let $y_t$ be the $y$-coordinate of the rectangle with the $t$-th smallest 
$y$-coordinate. If less than $t$ segments of $S$ span slab $s$, 
then set $y_t:= +\infty$. Let the {\em upper segment} of the slab $s$  be the horizontal 
segment $[x_1,x_2] \times [y_t]$.
Each slab contributes a cell $[x_1,x_2] \times (-\infty, y_t]$ 
to set ${\cal C}$. See Figure~\ref{fig:three-sided-rectangle}(a).

Property~$1$ is easy to verify, since $\frac{2n}{t}$ slabs are 
constructed. To prove Property~$2$, consider a point $q$ 
which lies in slab $s$ but does not lie in the cell 
$[x_1,x_2] \times (-\infty, y_t]$.
This implies that there are at least $t$ 
rectangles of $S$ which contain $q$, and  hence,
 $|S\cap q| \geq t$. To prove Property~$3$, consider a cell 
$r$ and its corresponding slab $s$. 
The rectangles of $S$ which intersect $r$ either span the slab $s$ or 
partially span the slab $s$. By our construction, there can be at most 
$t$ rectangles of $S$ of each type.
\end{proof}

\begin{observation}\label{lemma:containment}
{\em (Nested Property)} Let $t$ and $i$ be integers. Consider a $t$-level and a 
$2^it$-level shallow cutting. By our construction, each cell in $2^it$-level contains 
exactly $2^i$ cells of the $t$-level. More importantly, each cell in the $t$-level 
is contained inside a single cell of $2^it$-level (see Figure~\ref{fig:three-sided-rectangle}(a)).
\end{observation}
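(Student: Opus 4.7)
The plan is to separate the claim into an $x$-axis part (about how the vertical slabs refine) and a $y$-axis part (about where the upper segments of nested cells sit), and then combine them.

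First I would observe that the slab partition used in the construction is fully determined by the count of vertical lines of $S$. Since a slab at the $t$-level is bounded on the left and right by vertical lines of $S$ and contains exactly $t$ such lines inside it, and similarly a slab at the $2^it$-level contains $2^it$ lines inside it, consecutive blocks of $2^i$ slabs of the $t$-level assemble into exactly one slab of the $2^it$-level. This immediately gives the first statement: each $2^it$-level cell's $x$-range is the disjoint union of the $x$-ranges of exactly $2^i$ consecutive $t$-level cells. In particular, the $x$-projection of every $t$-cell lies inside the $x$-projection of a unique $2^it$-cell.

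The only remaining thing to check is that the upper $y$-boundary of a $t$-cell does not exceed the upper $y$-boundary of the $2^it$-cell that contains its $x$-range. Fix a $t$-slab $s=[x_1,x_2]\times(-\infty,\infty)$ and let $S(s)\subseteq S$ be the rectangles of $S$ that fully span $s$; similarly, for the containing $2^it$-slab $s'=[X_1,X_2]\times(-\infty,\infty)$, let $S(s')$ be the rectangles that fully span $s'$. Since $[x_1,x_2]\subseteq[X_1,X_2]$, any rectangle spanning $s'$ also spans $s$, giving $S(s')\subseteq S(s)$. The upper segment of $s$ has $y$-coordinate equal to the $t$-th smallest $y$-coordinate among $S(s)$, call it $y_t$; the upper segment of $s'$ has $y$-coordinate equal to the $(2^it)$-th smallest among $S(s')$, call it $Y$. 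Because $S(s')\subseteq S(s)$, the $t$-th smallest among the superset $S(s)$ is at most the $t$-th smallest among the subset $S(s')$, which in turn is at most the $(2^it)$-th smallest among $S(s')$. Hence $y_t\le Y$, so the cell $[x_1,x_2]\times(-\infty,y_t]$ lies inside $[X_1,X_2]\times(-\infty,Y]$, establishing containment.

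There isn't really a hard step here; the only thing one might slip on is the direction of the order-statistic inequality when the spanning set shrinks, so I would make sure to write that inequality chain explicitly. A mild edge case is when $Y=+\infty$ (fewer than $2^it$ rectangles span $s'$), in which case containment is automatic; this should just be flagged in passing rather than argued separately.
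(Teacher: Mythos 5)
Your proof is correct and is essentially the argument the paper leaves implicit (the paper simply asserts the observation ``by our construction'' without writing it out). The two ingredients you isolate — that the slab boundaries are determined purely by counting vertical lines, so $2^i$ consecutive $t$-slabs assemble into one $2^it$-slab, and that shrinking a slab can only enlarge the set of spanning rectangles, so the $t$-th order statistic can only decrease — are exactly what makes the construction nest, and your explicit inequality chain $y_t \le (\text{$t$-th smallest of } S(s')) \le Y$ together with the $Y=+\infty$ edge case is complete.
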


\begin{figure}
 \centering
 \includegraphics[scale=1]{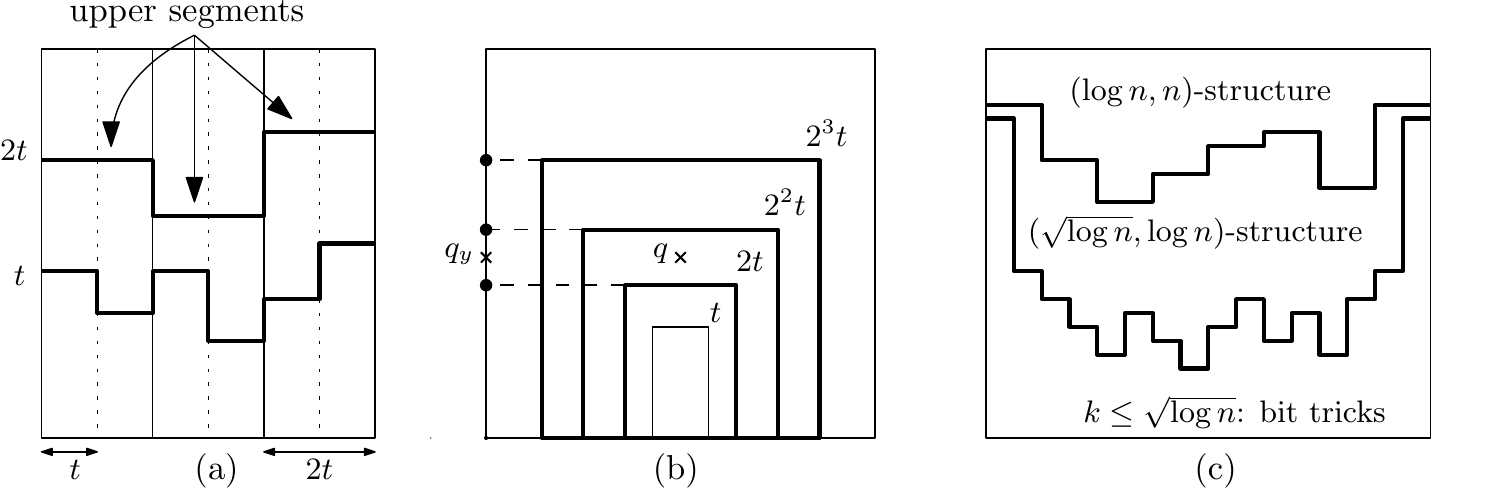}
  \caption{(a) A portion of the $t$-level and $2t$-level is shown. Notice that by our construction, 
  each cell in the $t$-level is contained inside a cell in the $2t$-level. 
  (b) A cell in the $t$-level and the set ${\cal C}_r$ associated with it. (c) 
  A high-level summary of our data structure.}
 \label{fig:three-sided-rectangle}
 \end{figure}

\subsubsection{Data structure}

Now we will use nested shallow cuttings to find a constant-factor approximation for the 
$3$-sided rectangle stabbing in 2d problem. 
In \cite{ahz10}, the authors show how to convert a constant-factor approximation into a 
$(1+\vare)$-approximation for this geometric setting.
The  solution is based on {\em $(t,t')$-level-structure} and {\em $(\leq\sqrt{\log n})$-level shared table}. 

\paragraph{$(t,t')$-level structure.}
Let  $i, t$ and $t'$ be integers s.t. $t'=2^it$. If $q(q_x,q_y)$ lies between the $t$-level and the $t'$-level cutting of $S$, 
then a $(t,t')$-level-structure will answer the approximate 
counting query in $O(1)$ time and occupy
 $O\left(n+\frac{n}{t}\log t'\right)$ space. 

\vgap
{\em Structure.} 
Construct  a shallow cutting of $S$ for levels 
$2^jt, \forall j\in [0,i]$.
For each cell, say $r$, in the $t$-level 
we do the following: Let ${\cal C}_r$ be the set of cells  from 
the $2^1t, 2^2t,2^3t,\ldots,2^it$-level, which 
contain $r$ (Observation~\ref{lemma:containment} guarantees this property). 
Now project the upper segment of each cell of ${\cal C}_r$ onto the $y$-axis (each segment projects to a point). 
Based on the $y$-coordinates of these $|{\cal C}_r|$ projected points build a fusion-tree~\cite{fw93}. 
Since there are $O(n/t)$ cells in the $t$-level and $|{\cal C}_r|=O(\log t')$, the total space occupied  is $O(\frac{n}{t}\log t')$.
See Figure~\ref{fig:three-sided-rectangle}(b).

\vgap
{\em Query algorithm.} 
Since $q_x\in\IntRange{2n}$, it takes $O(1)$ time to find the cell $r$ 
of the $t$-level whose $x$-range contains $q_x$. 
If the predecessor of $q_y$ in ${\cal C}_r$ 
belongs to the $2^jt$-level, then $2^jt$ is a constant-factor approximation of $k$.   
The predecessor query  also takes $O(1)$ time.  

\paragraph{$(\leq\sqrt{\log n})$-level shared table.}
Suppose $q$ lies in a cell in the $\sqrt{\log n}$-level 
shallow cutting of $S$. Then constructing the $(\leq \sqrt{\log n})$-level shared table
will answer the exact counting query in $O(1)$ time. 
We will need the following lemma.

\begin{lemma}
For a cell $c$ in the $\sqrt{\log n}$-level shallow cutting of $S$, its {\em conflict list} $S_c$ 
is the set of rectangles of $S$ intersecting $c$. Although the number of cells in the $\sqrt{\log n}$-level 
is $O\left(\frac{n}{\sqrt{\log n}}\right)$,
the number of combinatorially ``different" conflict lists is merely $O(\sqrt{n})$. 
\end{lemma}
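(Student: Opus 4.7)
The plan is to bound the number of combinatorial types of conflict lists by a function of $t := \sqrt{\log n}$ alone; since $|S_c| \le 2t$ for every cell by Property~$3$, the resulting count will be much smaller than $\sqrt{n}$.

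First I would fix a cell $c$ whose slab is $[x_1,x_2]$, containing exactly $t$ of the $2n$ vertical boundary lines of $S$, and whose upper segment sits at height $y_t$. The conflict list $S_c$ contains at most $2t$ rectangles: up to $t$ span the slab, and up to $t$ have at least one $x$-endpoint strictly inside it. Next I would encode $S_c$ by ranks only: for each rectangle $r \in S_c$, replace its left and right $x$-endpoints by their position in $\{\bot_L, 1, \ldots, t, \bot_R\}$ (where $\bot_L, \bot_R$ indicate that the endpoint lies outside the slab to the left or right), and replace the $y$-coordinate of its lower edge by its rank among the at most $2t$ distinct lower edges appearing in $S_c$. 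Two cells are declared combinatorially equivalent when their encoded multisets coincide.

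This notion is exactly what the shared-table paradigm needs: any query point $q$ lying in $c$ can be translated in $O(1)$ time into the analogous pair of ranks using a per-cell table of size $O(t)$, and the counting answer depends only on these ranks together with the encoded $S_c$. Hence a single precomputed lookup table, indexed by the combinatorial type of $S_c$ and the normalized position of $q$, serves all cells of the same type.

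Finally I would count the types. Each rectangle in an encoded list is a triple in $\{\bot_L,1,\ldots,t,\bot_R\}^2 \times \{1,\ldots,2t\}$, giving $O(t^3)$ possibilities, so the number of encoded lists of size at most $2t$ is bounded by $\binom{O(t^3)}{2t} \le 2^{O(t\log t)}$. Substituting $t = \sqrt{\log n}$ yields $2^{O(\sqrt{\log n}\,\log\log n)} = n^{O(\log\log n/\sqrt{\log n})} = n^{o(1)}$, which is comfortably $O(\sqrt{n})$. The main obstacle is pinning down combinatorial equivalence so that it is faithful, i.e., two cells of the same type really do yield the same count on every normalized query position; once that is done the counting argument is routine, and the slack between $n^{o(1)}$ and $\sqrt{n}$ gives plenty of room for the encoding to carry any auxiliary information (such as which rectangles are spanning versus partial) without spoiling the bound.
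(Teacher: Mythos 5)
Your proof is correct and follows essentially the same route as the paper: perform a rank-space reduction on each conflict list so that a cell's combinatorial type depends only on $O(|S_c|\log|S_c|) = O(\sqrt{\log n}\,\log\log n)$ bits, then count the number of such types to get $2^{O(\sqrt{\log n}\,\log\log n)} = n^{o(1)} = O(\sqrt{n})$. The only cosmetic difference is that you record $y$-ranks explicitly and use sentinels $\bot_L,\bot_R$ for out-of-slab endpoints, whereas the paper maps to a $\IntRange{2|S_c|}\times\IntRange{|S_c|}$ grid and implicitly encodes the $y$-rank by listing labels in sorted $y$-order; both are valid and give the same bound.
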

\begin{proof}
Consider any set $S_c$ from the shallow cutting.
By a standard  rank-space reduction the endpoints of $S_c$ will   
lie on a  $\IntRange{2|S_c|} \times \IntRange{|S_c|}$  grid.
Any set $S_c$ on the $\IntRange{2|S_c|} \times \IntRange{|S_c|}$ 
grid can be {\em uniquely} represented using $O(|S_c|\log|S_c|)=O(\sqrt{\log n}\log\log n)$ bits 
as follows:  (a) assign a {\em label} to each rectangle, 
and (b) write down the label of each rectangle in increasing order 
of their $y$-coordinates. The label for a rectangle $[x_1,x_2] \times [y,\infty)$ 
will be $``x_1 x_2"$ which requires $O(\log\log n)$ bits. 
The number of combinatorially different conflict lists
which can be represented using $O(\sqrt{\log n}\log\log n)$ bits is 
bounded by $2^{O(\sqrt{\log n}\log\log n)}=O(n^{\delta})$, for an arbitrarily small $\delta<1$. 
We set $\delta=1/2$.
\end{proof}

{\em Shared table.} Construct a $\sqrt{\log n}$-level 
shallow cutting of $S$. For each cell $c$, perform a rank-space reduction of its conflict list 
$S_c$. Collect the combinatorially different conflict lists.
On each conflict list, the number of combinatorially different queries will be only 
$O(|S_c|^2)=O(\log n)$. 
In a lookup table, for each pair of $(S_c,q)$ we store the exact value of $|S_c \cap q|$.
The total number of entries in the lookup table is $O(n^{1/2}\log n)$.

{\em Query algorithm.} Given a query $q(q_x,q_y)$,  the following three $O(1)$ time operations are performed:
(a) Find the cell $c$ in the $\sqrt{\log n}$-level which contains $q$. 
If no such cell is found, then stop the query and conclude that $k\geq \sqrt{\log n}$.
(b) Otherwise, perform a rank-space reduction on $q_x$ and $q_y$ to map it to the 
$\IntRange{2|S_c|} \times \IntRange{|S_c|}$ grid. Since, $|S_c|=O(\sqrt{\log n})$, 
we can build  fusion trees~\cite{fw93} on $S_c$ to perform the rank-space reduction in $O(1)$ time.  
(c) Finally, search for $(S_c,q)$ in the lookup table and report the exact count.

\paragraph{Final structure.} At first thought, one might be tempted to construct a 
$(0,n)$-level-structure. However, that would occupy $O(n\log n)$ space. 
The issue is that the $(t,t')$-level structure requires super-linear space for small values of $t$. 
Luckily,  the $(\leq\sqrt{\log n})$-level shared table will efficiently handle the small values of $t$.

Therefore, the strategy is to construct the following: 
(a) a $(\leq\sqrt{\log n})$-level shared table,
(b) a $(\sqrt{\log n},\log n)$-level-structure, and (c) a $(\log n,n)$-level-structure.
Now, the space occupied by all the three structures will be $O(n)$. 
See Figure~\ref{fig:three-sided-rectangle}(c) for a summary of our data structure.

\begin{remark}\label{rem:ac-many}
\normalfont For the standard $3$-sided rectangle stabbing in 2d problem, 
a simple binary search on the levels leads to a linear-space data structure with a query time of $\Oe(\log\log U + (\log\log n)^2)$. 
The technique of Afshani~{\em et al.}~\cite{ahz10}
can be used to answer this approximate counting query. However, their analysis works well for structures with query time 
of the form $\log n$ or $\log_Bn$, but breaks down  for structures with query time of the form $\log\log n$.
\end{remark}

\begin{remark}\label{rem:ec-many}
\normalfont If we want an exact count for  the standard $3$-sided rectangle stabbing in 2d problem, 
then the problem can be reduced to exact counting for standard dominance search in 2d~\cite{eo82}.
Jaja~{\em et al.}~\cite{jms04} present a linear-space structure which can answer the exact counting 
for dominance search in 2d in $\Oe(\log\log U + \log_wn )$ time.
\end{remark}

\section{Colored Dominance Search in $\IR^3$}\label{sec:dominance}

\begin{theorem}\label{thm:3d-dom}
In the colored dominance search in $\IR^3$ problem, the input set $S$ is  $n$ 
colored points in $\IR^3$ and the query $q$ is a point.
Then there is a pointer machine data structure of size $\Oe(n\log^{*}n)$ which can answer an approximate 
colored counting query in $\Oe(\log n\cdot \log\log n)$ time.
The notation $\Oe(\cdot)$ hides the dependency on $\vare$.
\end{theorem}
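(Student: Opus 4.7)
The plan is to obtain Theorem~\ref{thm:3d-dom} by designing an efficient reporting companion structure for colored 3D dominance and then plugging it into the general ``reporting-only'' reduction (Reduction-II) outlined in Section~5. Since the reduction turns a reporting companion into an approximate counter while paying only a modest overhead in space and query time, the two ingredients combine to give the stated bounds.

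First, I would build a reporting structure for colored 3D dominance, in the spirit of the colored-to-standard reduction used in Subsection~\ref{subsec:trans}. Process the points in $z$-decreasing order; at each point $p$ of color $c$, record a pointer to the next same-color occurrence in this order together with auxiliary $(x,y)$ information that certifies whether $p$ is the ``first'' point of color $c$ visible in the query slab $z\ge q_z$. This reduces colored 3D dominance reporting to a non-colored higher-dimensional reporting problem, which admits an $\Oe(n)$-space pointer-machine structure via persistent priority search trees with fractional cascading. The reporting time for $T$ output colors will be $O(\log n + T\cdot\log\log n)$, where the $\log\log n$ per color comes from the extra pointer-chase used to decode the ``first-occurrence'' representation.

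Second, I would combine this reporting structure with Reduction-II to obtain the approximate counter. The reduction performs reporting queries on nested random subsamples of colors and halts at the sample where the output has the right order of magnitude; that output is then used to estimate $k$. To keep total space at $\Oe(n\log^{\ast} n)$, I would exploit the paper's technique of obtaining a linear-space data structure by random sampling on colors, applied iteratively: build the reporting structure on successively sparser color subsamples, with sample sizes shrinking by a $\log$ factor at each stage. The resulting chain of samples terminates in $O(\log^{\ast} n)$ stages and contributes $\Oe(n)$ per stage. A single approximate counting query touches only $O(\log\log n)$ of these stages and performs an $O(\log n)$ reporting query in each, yielding total query time $\Oe(\log n\cdot\log\log n)$.

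The hardest step will be realizing the reporting structure itself in $\Oe(n)$ space while achieving near-$\log n$ reporting time. Colored range reporting in 3D is substantially harder than its uncolored counterpart, and the standard ``next same-color'' reduction can raise the effective dimension, costing either a $\log n$ factor in space or a $\log n$ factor in time per output color. Threading the needle between these two losses --- keeping the per-color cost at $O(\log\log n)$ while retaining $\Oe(n)$ space on the pointer machine --- is where most of the technical work will reside; once this bound is in hand, Reduction-II and the iterated color-sampling argument merely pay the unavoidable $\log^{\ast} n$ price of approximating arbitrary $k$.
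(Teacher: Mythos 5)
Your proposal takes a genuinely different route from the paper, and unfortunately the route has gaps that prevent it from reaching the stated bounds.

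The paper does \emph{not} use Reduction-II (Theorem~\ref{thm:accq}) to prove Theorem~\ref{thm:3d-dom}; in fact, Remark~\ref{rem:ac-3d-dom} explicitly notes that this style of general reduction, applied to colored dominance in $\IR^3$, yields only $\Oe(n\log^2 n)$ space and $\Oe(\log^2 n)$ time. The difficulty you would face is concrete: Reduction-II's query bound is $O\bigl((\Q_\rep(n)+\vare^{-2}\log n)\cdot \log(\log_{1+\vare}|\mathcal{C}|)\bigr)$, and the binary search over decision structures costs a multiplicative $\Theta(\log\log n)$ factor that cannot be removed. Even granting your (unsubstantiated) claim of a colored reporting structure with $O(\log n + T\log\log n)$ time, the $\kappa$-term in Reduction-II's analysis assumes $O(1)$ per reported color, so your $\log\log n$-per-color cost would push the reporting phase itself to $\vare^{-2}\log n\cdot\log\log n$, and the binary search would tack on another $\log\log n$, landing at $\Oe(\log n\cdot(\log\log n)^2)$ --- a $\log\log n$ factor above the target. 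Also, the ``chain of $O(\log^* n)$ shrinking color subsamples'' you describe is not Reduction-II's mechanism (which uses $O(\log_{1+\vare}|\mathcal{C}|)$ decision structures on \emph{parallel} samples, not a nested chain) and is not how the paper obtains the $\log^* n$ factor.

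What the paper actually does: it first transforms colored 3D dominance into \emph{standard} (uncolored) $5$-sided rectangle stabbing via the region-decomposition argument of Subsection~\ref{subsec:red-colored-2}. It then builds, from scratch, an $\Oe(n\log\log n)$-space structure for that uncolored problem using a two-dimensional grid recursion (Lemma~\ref{lemma:subopt}), where each recursion level stores per-cell $(1+\vare)$-sketches and a linear-size $3$-sided stabbing structure; the $\log\log n$ depth of the recursion gives the space and query overheads. To shave the space to $\Oe(n\log^* n)$ (Lemma~\ref{lemma:almost-opt}) it splits by output size: for $k = O(\vare^{-2}\log n\log\log n)$ it reports via Rahul's $O(n\log^* n)$-space $5$-sided stabbing reporting structure (this citation is where the $\log^* n$ comes from, not from any iterated sampling); for larger $k$ it samples \emph{uncolored rectangles} with probability $1/\log\log n$ and runs the $\Oe(n\log\log n)$-space structure on the sample, giving $\Oe(n)$ space. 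Your instinct to use a companion reporting structure and sampling is aligned with part of this (the small-$k$ branch and the sample-based large-$k$ branch), but the crucial moves --- reduce to an uncolored stabbing problem first, then build a bespoke grid recursion rather than invoking a black-box reduction --- are missing, and those moves are precisely what avoid the $\log\log n$ binary-search penalty that Reduction-II cannot.
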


The strategy to prove this theorem is the following. 
First, we reduce the colored dominance search in $\IR^3$ problem to 
a {\em standard} problem of $5$-sided rectangle stabbing in $\IR^3$. 
Then in the remaining section we solve the 
standard $5$-sided rectangle stabbing in $\IR^3$ problem. 

\subsection{Reduction to $5$-sided rectangle stabbing in $\IR^3$}\label{subsec:red-colored-2}
In this subsection we present a reduction of colored dominance search in $\IR^3$ problem to the 
standard $5$-sided 
 rectangle stabbing  in $\IR^3$ problem. 
Let $S$ be a set of $n$ colored points lying in $\IR^3$. Let $S_c \subseteq S$ be the set of points of  color $c$, 
and  $p_1,p_2,\ldots,p_t$ be the  points of $S_c$ in decreasing order of their 
$z$-coordinate value. With each point $p_i(p_{ix},p_{iy},p_{iz})$, we associate a region $\phi_i$ in $\IR^3$ 
which satisfies the following invariant: a point $(x,y,z)$ belongs to $\phi_i$ if and only if 
in the region $[x,+\infty) \times [y,+\infty) \times [z,+\infty)$ the point of $S_c$ with the 
largest $z$-coordinate is $p_i$. The following assignment of regions ensures the 
invariant:
\begin{itemize}
\item $\phi_1 = (-\infty, p_{1x}] \times (-\infty, p_{1y}] \times (-\infty, p_{1z}]$
\item $\phi_i = (-\infty, p_{ix}] \times (-\infty, p_{iy}] \times (-\infty, p_{iz}] \setminus
\bigcup_{j=1}^{i-1} \phi_j, \forall i\in [2,|S_c|]$.
\end{itemize}

By our construction, each region $\phi_i$ is unbounded in the negative $z$-direction.
Each region $\phi_i$ is broken into disjoint $5$-sided rectangles via 
{\it vertical decomposition} in the $xy$-plane (see Figure~\ref{fig:5-sided}). The vertical decomposition 
ensures that the total number of disjoint rectangles generated is bounded by 
$O(|S_c|)$. Now we can observe that (i) if a color $c$ has at least one point inside $q$, 
then exactly one of its transformed 
rectangle will contain $q$, and (ii) if a color $c$ has no point inside $q$, then none of its transformed 
rectangles will contain $q$. Therefore, the colored dominance search in $\IR^3$ has been transformed 
to   the standard $5$-sided rectangle stabbing query.

\begin{figure}[h]
 \centering
\includegraphics[scale=0.7]{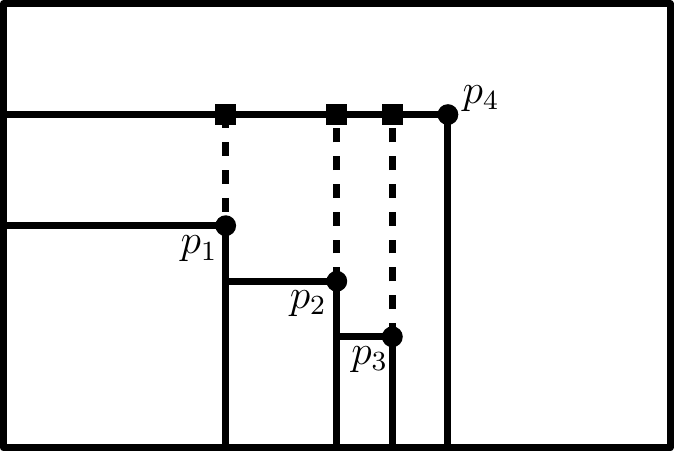}
\caption{A dataset containing four points. 
The projection of $\phi_1,\phi_2,\phi_3$ and 
$\phi_4$ onto the $xy$-plane is shown.
The dashed lines are created during the vertical decomposition. Each 
rectangle created during the vertical decomposition is lifted back 
to  a $5$-sided rectangle in $\IR^3$. }
\label{fig:5-sided}
\end{figure}

\subsection{Initial strcuture}

\begin{lemma}\label{lemma:subopt}
In the standard $5$-sided rectangle stabbing in $\IR^3$ problem, the input is a set $S$ 
 of $n$ $5$-sided rectangles in $\IR^3$  and the query $q$ is a point. 
Then there exists a pointer machine data structure of size $\Oe(n\log\log n)$ 
which can answer an approximate counting query in $\Oe(\log n\cdot \log\log n)$ time. 
\end{lemma}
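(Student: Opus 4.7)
The plan is to decompose the 5-sided stabbing problem in $\IR^3$ along the $x$-coordinate and reduce each resulting subproblem to the 3-sided rectangle stabbing in 2d problem, for which Lemma~\ref{lemma:rect-stab} already provides an efficient approximate-counting structure. The key observation is that a rectangle $R = [x_1, x_2] \times [y_1, y_2] \times (-\infty, z]$ whose $x$-range contains $q_x$ reduces to a 3-sided 2d stabbing constraint on $(y_1, y_2, z)$ against $(q_y, q_z)$.

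Concretely, I would build a balanced segment tree $T$ on the $2n$ values $\{x_1(R), x_2(R) : R \in S\}$. Each rectangle $R$ is associated with the $O(\log n)$ canonical nodes of $T$ whose slabs tile $[x_1(R), x_2(R)]$, and at every such node $v$ I attach a pointer-machine version of the secondary structure of Lemma~\ref{lemma:rect-stab} on the associated set $R_v$, viewed as 3-sided rectangles $[y_1, y_2] \times (-\infty, z]$ in the $(y, z)$-plane. To answer an approximate counting query, I walk the root-to-leaf path of $T$ corresponding to $q_x$, query the secondary structure at each of the $O(\log n)$ visited nodes, and return the sum of these approximate counts. Correctness holds because the canonical nodes partition the rectangles stabbed by $q$ into disjoint groups, and the sum of $(1+\vare)$-approximations of non-negative quantities is again a $(1+\vare)$-approximation of $k$.

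The naive bounds for this construction are $\Oe(n \log n)$ space and $\Oe(\log^2 n)$ query time, so to meet the target I need to shave a factor of $\log n / \log\log n$ from each. For the query time, the secondary structures would be threaded together by fractional cascading on the $(y,z)$-coordinates across adjacent nodes of $T$, so that the $O(\log n)$ secondary predecessor searches share a single $O(\log n)$ initial cost and each subsequent internal query costs only $O(\log\log n)$; the nested shallow cuttings from Section~\ref{subsec:standard} are well-suited for this because cells at lower levels sit fully inside cells at higher levels, preserving the cascading invariants. For the space, I would exploit the $(1+\vare)$-slack by grouping rectangles into $O(\log\log n)$ geometric buckets along $z$, so that within a bucket the approximate count is determined by a single shallow cutting, and each rectangle only needs to contribute $O(\log\log n)$ secondary copies instead of $O(\log n)$.

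The main obstacle is the space reduction. The $\Theta(\log n)$ replication factor of the segment tree is inherent in the exact setting, so the approximation slack must do genuine work, and the technical heart of the proof lies in making the geometric-bucketing argument interact cleanly with the disjoint-cover decomposition on which the summation-of-approximations step depends, while simultaneously ensuring that the fractional-cascading bridge between secondary structures remains implementable in the pointer-machine model. Once the bucketing and cascading are in place, the final space $\Oe(n \log\log n)$ and query time $\Oe(\log n \cdot \log\log n)$ follow by summing the per-node contributions over the modified segment tree.
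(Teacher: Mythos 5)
Your first two paragraphs set up a reasonable naive construction that matches what Remark~\ref{rem:ac-3d-dom} already gives ($\Oe(n\log n)$ space, $\Oe(\log^2 n)$ query), but the two fixes you propose to close the gap are sketches that do not obviously go through, and they are not what the paper does.

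On the query time: you want fractional cascading to reduce the per-node cost from $O(\log n)$ to $O(\log\log n)$ after an initial $O(\log n)$ search. But Lemma~\ref{lemma:rect-stab} achieves $O(\log\log U)$ only in the word-RAM model (van Emde Boas, fusion trees); in the pointer machine the secondary structure is the $O(\log n)$-time ladder of shallow cuttings from~\cite{ahz10}, not a sorted list. Fractional cascading speeds up repeated predecessor searches over catalogues, whereas each secondary query here walks a hierarchy of cuttings. The nested property of Observation~\ref{lemma:containment} is a containment relation \emph{within one node's} ladder; it does not provide the cross-node catalogue-graph relationship that fractional cascading requires between a segment-tree node and its child. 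So the claim that nested cuttings ``preserve the cascading invariants'' needs an actual argument, and I do not see one. On the space: the segment tree replicates each rectangle $\Theta(\log n)$ times along $x$; bucketing along $z$ is orthogonal to this replication and does not reduce it. You flag this as the main obstacle but then assert the bound ``follows,'' which is precisely the step that is missing.

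The paper sidesteps both problems by not using a segment tree at all. It builds a recursion tree where a node with $m$ rectangles is split by a $2\sqrt{m/t}\times 2\sqrt{m/t}$ grid on the $xy$-plane with $t=\log_{1+\vare}m$, and recurses into the $O(\sqrt{mt})$-size row and column slabs. This makes the recursion depth $O(\log\log n)$ rather than $O(\log n)$. Each rectangle is assigned to $O(1)$ nodes per level (Observation~\ref{obs:rec-tree}), and per-level space is $O(m)$ because the grid-cell structure stores only a $(1+\vare)$-sketch of size $O(\log_{1+\vare} m)$ per cell, with $O(m/t)$ cells; the 3-sided residuals are handled by a linear-size pointer-machine structure per slab. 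The recurrence $Q(n)=2Q(\sqrt{nt})+\Oe(\log n)$ then directly gives $\Oe(\log n\cdot\log\log n)$, and $O(n)$ space per level over $O(\log\log n)$ levels gives $\Oe(n\log\log n)$. The approximation slack does its work through the per-cell sketch, not through bucketing $z$-values, and there is no fractional cascading anywhere.
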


\noindent
The rest of the subsection is devoted to proving this lemma.

\vgap
\noindent
{\bf Recursion tree.} Define a parameter $t=\log_{1+\vare}n$. We will assume that the $5$-sided rectangles are unbounded 
along the $z$-axis. Consider the projection of the rectangles of $S$ on to the $xy$-plane and impose  an orthogonal 
$\IntRange{2\sqrt{\frac{n}{t}}} \times \IntRange{2\sqrt{\frac{n}{t}}}$ grid such that each horizontal and vertical slab contains the projections of $\sqrt{nt}$ sides of $S$.
 Call this the root of the recursion 
tree. Next, for each vertical and horizontal slab, we recurse on the rectangles of $S$ which are {\em sent} to that slab. 
At each node of the recursion tree, if we have $m$ rectangles in the subproblem,
then $t$ is changed to $\log_{1+\vare}m$ and the grid size changes to  
$\IntRange{2\sqrt{\frac{m}{t}}} \times \IntRange{2\sqrt{\frac{m}{t}}}$.
We stop the recursion when a node has less than $c$ rectangles, for a suitably large constant $c$. 

\vspace{0.1 in}
\noindent
{\bf Assignment of rectangles.} For a node in the tree, 
the intersection of every pair of  horizontal and 
vertical grid line defines a {\em grid point}. 
Each rectangle of $S$ is assigned to $\Oe(\log\log n)$ nodes in the tree. 
The assignment of a rectangle to a node is decided by the following three cases:

\vgap
\noindent
{\it Case-I.} The $xy$-projection of a rectangle  intersects none of the grid points, i.e., 
it lies completely inside one of the row slab or/and the column slab. 
Then the rectangle is not assigned to this node, but sent to the child node 
corresponding to the row or column  the rectangle lies in.

\vgap
\noindent
{\it Case-II.} The $xy$-projection of a rectangle $r$  intersects at least one of the grid points. 
Let $c_l$ and $c_r$ be the leftmost and the rightmost column of the grid intersected by $r$. 
Similarly, let $r_b$ and $r_t$ be the bottommost and the topmost row of the grid intersected by $r$.

Then the rectangle is broken into at most five disjoint pieces: 
 a {\em grid rectangle}, which  is the bounding box of all the grid points lying inside $r$ (see Figure~\ref{fig:type-II}(b)),
 two {\em column rectangles}, which  are the portions of $r$ lying in column $c_l$ and $c_r$ (see Figure~\ref{fig:type-II}(d)), and 
 two {\em row rectangles}, which are the remaining portion of the rectangle $r$ lying in row $r_b$ and $r_t$ (see Figure~\ref{fig:type-II}(c)). 
The grid rectangle is {\em assigned} to the node. Note that each column rectangle (resp., row rectangle) is now 
a $4$-sided rectangle in $\IR^3$ w.r.t. the column (resp., row) it lies in, and is sent 
to its corresponding child node.

\begin{figure}[h]
 \centering
\includegraphics[scale=0.4]{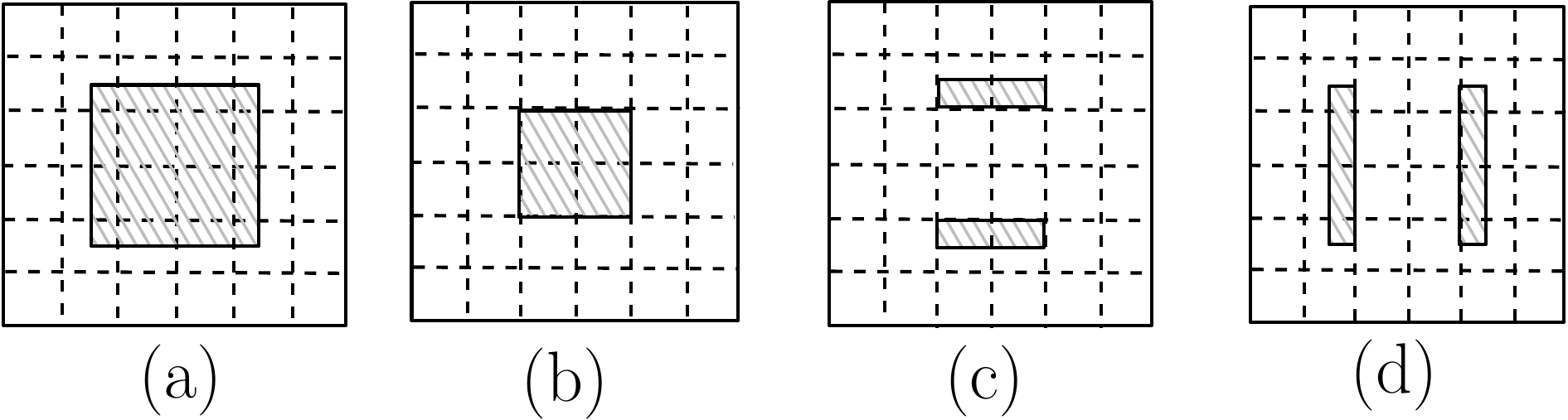}
\caption{}
\label{fig:type-II}
\end{figure}

\vgap
\noindent
{\it Case-III.} The $xy$-projection of a {\em $4$-sided rectangle} $r$ intersects at least one of 
the grid points. Without loss of generality, assume that the $4$-sided rectangle $r$ is 
unbounded along the negative $x$-axis. 
Then the rectangle is broken into at most four disjoint pieces: 
 a {\em grid rectangle,} as shown in Figure~\ref{fig:type-III}(b),
 one  {\em column rectangle}, as shown in   Figure~\ref{fig:type-III}(d), and 
 two {\em row rectangles}, as shown in   Figure~\ref{fig:type-III}(c). 
The grid rectangle and the two row rectangles are {\em assigned} to the 
node.
Note that the two row rectangles are now $3$-sided rectangles in $\IR^3$ w.r.t. their 
corresponding rows (unbounded in one direction along $x-$, $y-$ and $z-$axis). 
The column rectangle is sent to its corresponding child node. 
Analogous partition is performed for $4$-sided rectangles which are 
unbounded along positive $x$-axis, positive $y$-axis and negative 
$y$-axis.

\begin{figure}[h]
 \centering
\includegraphics[scale=0.4]{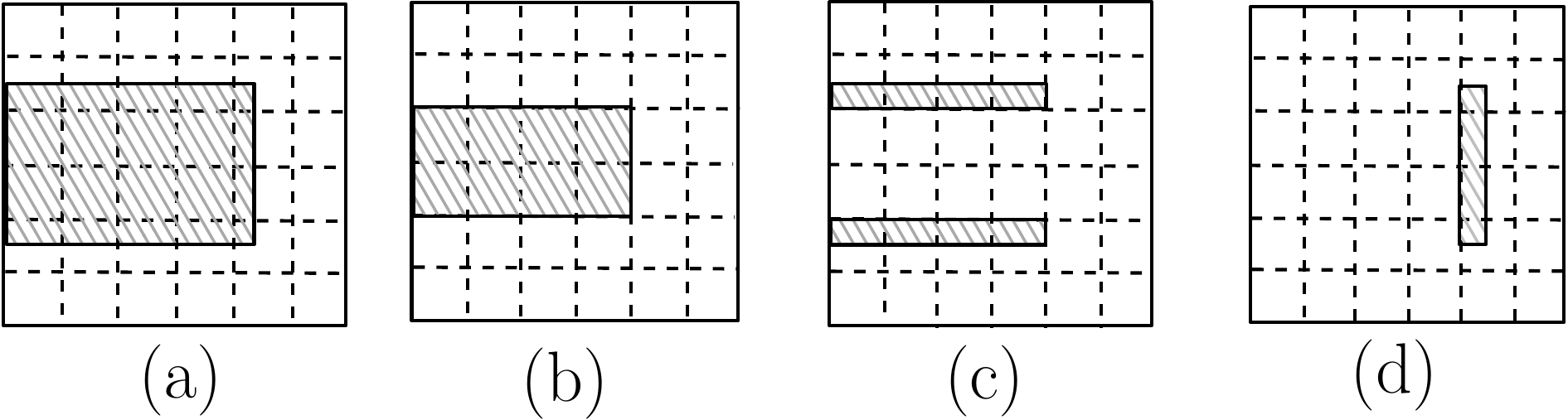}
\caption{}
\label{fig:type-III}
\end{figure}

\begin{observation}\label{obs:rec-tree}
A rectangle of $S$ gets assigned to at most four nodes at each level in the recursion tree.
\end{observation}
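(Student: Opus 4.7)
The plan is to prove the observation by induction on the level $\ell$ of the recursion tree, tracking how a fixed rectangle $r \in S$ decomposes as it descends. I would maintain the invariant that at any level $\ell$, the fragments of $r$ present at that level consist of either (a) a single $5$-sided piece sitting at one node, or (b) at most four $4$-sided pieces, each at a distinct node. The observation then follows immediately, since ``assignment to a node" requires at least one piece of $r$ to be processed at that node.

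The main work is a case analysis of how a single piece at level $\ell$ produces its fragments at level $\ell+1$. Using the three cases in the construction:
\begin{itemize}
\item A $5$-sided piece subject to Case I sends one $5$-sided piece to a single child and is not assigned at the current node.
\item A $5$-sided piece subject to Case II is assigned at the current node (via its grid rectangle) and sends four pieces --- two column rectangles and two row rectangles --- to four distinct children. By the construction, each of these four pieces is now \emph{$4$-sided}, because one of its boundaries coincides with a grid line of the parent slab and is therefore ``free'' within the child.
\item A $4$-sided piece subject to Case I sends exactly one $4$-sided piece to a single child.
\item A $4$-sided piece subject to Case III contributes assignments at the current node (one grid rectangle and two $3$-sided row rectangles, the latter not propagating further) and sends exactly one $4$-sided column piece to a single child.
\end{itemize}

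The key structural consequence, which I would highlight as the heart of the proof, is that once a fragment of $r$ becomes $4$-sided, every subsequent split produces exactly one $4$-sided child. Hence the only place where $r$ can branch into multiple siblings is the unique level at which Case II is first invoked. Before that level, $r$ occupies at most one node per level (repeated Case I). At the Case II level, $r$ is assigned at one node and creates at most four $4$-sided children. At every deeper level, each surviving $4$-sided fragment yields exactly one $4$-sided successor (while some fragments may be absorbed completely into assignments as $3$-sided row rectangles), so the number of nodes carrying a piece of $r$ stays bounded by four at every level.

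The only place I expect any real care is in confirming that Case II really does produce only $4$-sided pieces (not $5$-sided ones), since this is what keeps the branching factor from compounding across levels. This is a direct consequence of how $c_l$, $c_r$, $r_b$, $r_t$ are chosen: each column (resp.\ row) piece has its right (resp.\ top) side coinciding with a grid line of the parent slab, so within the child slab it has effectively one fewer bounding side. Everything else is bookkeeping, and the invariant propagates routinely through the induction.
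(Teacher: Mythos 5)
Your proof is correct and follows essentially the same approach as the paper: it observes that Case~II can occur only once (after which all fragments are $4$-sided), and that Case~III propagates exactly one $4$-sided piece to a single child, so the branching factor is bounded by four forever after. You formalize this as an inductive invariant on the number and type of surviving fragments per level, but the underlying argument is identical to the paper's.
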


\begin{proof}
Consider a rectangle $r\in S$. 
If $r$ falls under Case-II, then its grid rectangle is assigned to the node. 
Note that $r$ can fall under Case-II only once, since each of its 
four row and column rectangles are now effectively $4$-sided rectangles.
Let $r'$ be one of these row or column rectangles.
If $r'$ falls under Case-III at a node, then it gets assigned there. 
However, this time exactly {\em one} of the broken portion of $r'$ will be sent to the child node.
Therefore, there can be at most four nodes at each level where rectangle $r$ (and broken portions of $r$) 
can get assigned.
\end{proof}

\vgap
\noindent
{\bf Data structures at each node.}  We build two types of structures 
at each node in the tree.

\vgap
\noindent
{\it Structure-I.}
A rectangle $r'$ is {\it higher} than rectangle $r''$ if $r'$ has a larger span than $r''$ along $z$-direction.
For each cell $c$ of the grid, based on the rectangles which completely cover $c$, 
we  construct a {\it sketch}  as follows: select the rectangle with the  $(1+\vare)^0, (1+\vare)^1,(1+\vare)^2,\ldots$-th 
largest span. For a given cell, the size of the sketch will be $O(\log_{1+\vare}m)$.

\vgap
\noindent
{\it Structure-II.} 
For a given row or column in the grid, let $\hat{S}$ be  the $3$-sided rectangles in $\IR^3$  
assigned to it. We build the linear-size structure of \cite{ahz10} on $\hat{S}$, 
which will return a $(1+\vare)$-approximation of  $|\hat{S}\cap q|$  in $\Oe(\log n)$ time. 
This structure is built for  each row and column slab.

\vspace{0.1 in}
\noindent
{\bf Space analysis.} Consider a node in the recursion tree with $m$ rectangles. 
There will be  $\left(2\sqrt{\frac{m}{t}}\right) \times \left(2\sqrt{\frac{m}{t}}\right)=4\frac{m}{t}$
cells at this node.
The space occupied by structure-I will be $O\left(\frac{m}{t}\cdot\log_{1+\vare}m \right)=O(m)$.
The space occupied by   structure-II  will be $O(m)$.  
Using Observation~\ref{obs:rec-tree}, 
 the total space occupied by all the nodes at a particular level will be $O(n)$.
Since the height of the recursion tree is $\Oe(\log\log n)$, the total space occupied is 
$\Oe(n\log\log n)$.

\vspace{0.1 in}
\noindent
{\bf Query algorithm.} Given a query point $q$, we start at the root node. At each visited node, 
the following three steps are performed:
\begin{enumerate}
\item  {\it Query structure-I.} Locate the cell $c$ on the grid containing $q$.
Scan the sketch of cell $c$ to return a $(1+\vare)$-approximation of 
the number of rectangles which cover $c$ and contain $q$. 
This takes $\Oe(\log m)$ time.

\item {\it Query structure-II.} Next, query structure-II of the horizontal and 
the vertical slab containing $q$, to find a $(1+\vare)$-approximation of the 
$3$-sided rectangles containing $q$.
 This takes $\Oe(\log m)$ time.
\item {\it Recurse.} Finally, we recurse on the horizontal and the vertical slab containing $q$. 
\end{enumerate}

The final output is the {\em sum} of the count returned by all the nodes queried.

\vspace{0.1 in}
\noindent
{\bf Query time analysis.}  Let $Q(n)$ denote the overall query time. 
Then 
\[Q(n) = 2Q(\sqrt{nt}) + \Oe(\log n), t=\log_{1+\vare} n.\]

This solves to $Q(n)=\Oe(\log n\cdot\log\log n)$. 
This finishes the proof of Lemma~\ref{lemma:subopt}.

\subsection{Final structure}

In this subsection we improve upon the data structure 
built in the previous subsection by reducing the size 
 to $\Oe(n\log^{*}n)$.

\begin{lemma}\label{lemma:almost-opt}
In the standard $5$-sided rectangle stabbing in $\IR^3$ problem, the input is a set $S$ 
 of $n$ $5$-sided rectangles in $\IR^3$  and the query $q$ is a point. 
Then there exists a pointer machine data structure of size $\Oe(n\log^{*}n)$ 
which can solve an approximate counting problem in $\Oe(\log n\cdot \log\log n)$ time.  
\end{lemma}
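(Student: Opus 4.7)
The plan is to bootstrap the $\Oe(n\log\log n)$ structure of Lemma~\ref{lemma:subopt} via an iterated self-application, so as to drive the recursion depth down from $\log\log n$ to $\log^{*} n$. The $\log\log n$ factor in Lemma~\ref{lemma:subopt} is exactly the depth of the grid recursion, because the subproblem shrinks from $m$ down to only $\sqrt{mt}=\Theta(\sqrt{m\log m})$ in one step. To obtain $\log^{*} n$ depth, I would modify the top level so that one step shrinks the problem size from $m$ to roughly $\log^{(j)} m$ (the $j$-fold iterated logarithm), at the cost of only $\Oe(n)$ additional space.

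\noindent Concretely, I would define a hierarchy of structures $\mathcal{A}_0,\mathcal{A}_1,\mathcal{A}_2,\ldots$. Set $\mathcal{A}_0$ to be the structure from Lemma~\ref{lemma:subopt}. For $j\ge 1$, given $n$ $5$-sided rectangles, $\mathcal{A}_j$ is built as follows: impose a grid analogous to the one in Lemma~\ref{lemma:subopt}, but with parameters chosen so that each row/column slab contains only $h := \log^{(j)} n$ rectangle sides (rather than $\sqrt{nt}$). Rectangles are classified by Cases~I--III exactly as before; the grid-rectangles at this node are summarized by sketches (handled by the linear-space $3$-sided machinery of \cite{ahz10} for boundary pieces), and row/column rectangles are pushed down into their child slabs. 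Inside each child slab, which now contains only $O(h)$ rectangles, build $\mathcal{A}_{j-1}$ recursively. The space recurrence becomes
\[
\mathrm{Space}(\mathcal{A}_j,n)\;\le\;\Oe(n)\;+\;\frac{n}{h}\cdot\mathrm{Space}\bigl(\mathcal{A}_{j-1},h\bigr),
\]
which, by induction on $j$, unrolls to $\Oe(n\cdot j)$. Setting $j=\log^{*} n$ then yields $\Oe(n\log^{*} n)$. For the query, one invokes the top-level sketch and the $3$-sided structures (as in Lemma~\ref{lemma:subopt}) in $\Oe(\log n\cdot\log\log n)$ time, and then recurses into $O(1)$ children; the recurrence
\[
Q(\mathcal{A}_j,n)\;\le\;Q(\mathcal{A}_{j-1},h)+\Oe(\log n\cdot\log\log n)
\]
telescopes to $\Oe(\log n\cdot\log\log n)$, matching the bound of Lemma~\ref{lemma:subopt}.

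\noindent The main obstacle is the top-level partitioning step: I need to show that with a grid of only $O(n/h)$ slabs and $\Oe(n)$ total space for the sketches at this level, an arbitrary query point can still be answered to $(1+\vare)$-approximation by (i) a guidance layer that absorbs all ``bulk'' rectangles covering a grid cell, and (ii) a constant number of recursive sub-queries in $\mathcal{A}_{j-1}$. The difficulty is that a $5$-sided rectangle can touch many slabs, so an analog of Observation~\ref{obs:rec-tree} has to be re-proved to guarantee that each rectangle is charged to $O(1)$ sub-problems per level even with the much finer slab partition. I would handle this by applying Case~II/III decomposition iteratively and using the $(t,t')$-level structure of Section~\ref{subsec:standard} to compress the per-cell sketches, so that the linear-space budget at the root is preserved despite the much smaller value of $h$. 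Once that invariant is in place, the induction on $j$ gives the claimed bound, and the query analysis from Lemma~\ref{lemma:subopt} carries over verbatim.
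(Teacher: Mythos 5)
Your proposal takes a fundamentally different route from the paper, and unfortunately it has a gap that I do not think can be patched as stated.

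The core obstruction is the number of grid cells, and you already sense it. In Lemma~\ref{lemma:subopt} the slab width $\sqrt{mt}$ (with $t=\log_{1+\vare}m$) is not an arbitrary choice: it is exactly the balance point at which the grid has $G^2 = \Theta(m/t)$ cells so that Structure-I, which stores a sketch of size $\Theta(\log_{1+\vare}m)=\Theta(t)$ per cell, fits in $O(m)$ space. If you instead make each slab hold only $h$ sides, the grid has $G=\Theta(n/h)$ slabs per axis and hence $\Theta(n^2/h^2)$ cells. Even without any sketch at all, just addressing those cells already takes $\Omega(n^2/h^2)$ space, which exceeds $\Theta(n)$ the moment $h \ll \sqrt{n}$; with the sketches it is $\Theta\left(\frac{n^2}{h^2}\log_{1+\vare}n\right)$, and requiring this to be $\Oe(n)$ forces $h \ge \sqrt{n\log_{1+\vare}n}$, i.e.\ exactly the original parameter. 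So the $\Oe(n)$ term in your space recurrence does not hold for $h=\log^{(j)}n$, and the induction on $j$ never gets off the ground. Your suggested fix --- reusing the $(t,t')$-level structure from Section~\ref{subsec:standard} to ``compress the per-cell sketches'' --- does not address this: that structure compresses the per-cell contents, not the number of cells, and in any case it is designed for $3$-sided stabbing in the plane, not for the 2D projection of a $5$-sided box in $\IR^3$. (A $1$D slab partition with fan-out $n/h$ also does not rescue you, because the $5$-sided boxes project to rectangles bounded on both axes, so a $2$D grid is needed to peel off the grid/row/column pieces in Cases~II--III.)

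The paper's actual route to shaving $\log\log n$ down to $\log^{*}n$ is entirely different and sidesteps the grid issue. It splits on $k$: for small $k$ (at most $C\vare'^{-2}\log n\log\log n$) it queries an existing $O(n\log^{*}n)$-space, $O(\log n\log\log n + k)$-time reporting structure for $5$-sided stabbing and reports the exact count; for large $k$ it invokes Lemma~\ref{lem:sampling} to pick a subset $R\subset S$ of size $O(n/\log\log n)$ so that $|R\cap q|\cdot\log\log n$ is already a $(1\pm\vare')$-estimate, then builds the Lemma~\ref{lemma:subopt} structure on $R$ --- which, since $|R|=O(n/\log\log n)$, now occupies only $\Oe(n)$ space --- and returns its output scaled by $\log\log n$. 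The two error parameters compose to $(1\pm\vare)$. Random sampling (not a finer recursion) is what removes the $\log\log n$ space overhead; the $\log^{*}n$ factor comes from the auxiliary reporting structure, not from a tree of depth $\log^{*}n$. If you want to salvage your plan, you would have to replace the uniform grid with something whose per-level footprint does not scale as (number of slabs)$^2$, which is a genuinely new data-structural idea rather than a parameter tweak.
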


Let $\vare' \leftarrow \vare/4$ and $C>3$. The reason for choosing  these parameters 
will become clear later. We divide the solution into two cases. 

\subsubsection{Case-I: $k\in[0,C\vare'^{-2}\log n\cdot\log\log n]$ }
For the reporting version of $5$-sided rectangle stabbing in $\IR^3$ problem, 
Rahul~\cite{r15} presented a structure of size $O(n\log^{*}n)$ 
which can answer a query in $O(\log n\cdot \log\log n + k)$ time.
Build this structure on all the rectangles in set $S$. 
Given a query point $q$,  query the 
structure till all the rectangles in $S\cap q$ have been reported 
or $C\vare'^{-2}\log n\cdot\log\log n +1$ rectangles in $S\cap q$
have been reported. 
If the first event happens, then the exact value of $k$ is reported. 
Otherwise, we  conclude that $k >C\vare'^{-2}\log n\cdot\log\log n$.

\subsubsection{Case-II: $k\in[C\vare^{-2}\log n\cdot\log\log n,n]$}

We will need the following random sampling based lemma.

\begin{lemma}\label{lem:sampling}
Let $S$ be a set of $n$ $5$-sided rectangles in $\IR^3$. 
Consider a query point $q$ such that 
$k \geq C\vare'^{-2}\log n\cdot\log\log n$.
Then there exists a set $R \subset S$ of size $O\left(\frac{n}{\log\log n}\right)$ such 
that  $(|R\cap q|\cdot\log\log n) \in [(1-\vare')k, (1+\vare')k]$.
\end{lemma}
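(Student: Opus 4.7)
The natural approach is a standard random-sampling/Chernoff argument, where the sampling probability is chosen to be exactly $p = 1/\log\log n$ so that the rescaling by $\log\log n$ in the conclusion is precisely the inverse of the sampling rate.

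My plan is the following. Form $R$ by including each rectangle of $S$ independently with probability $p = 1/\log\log n$. Then $\mathbb{E}[|R|] = n/\log\log n$, and a Chernoff bound shows that $|R| = O(n/\log\log n)$ with high probability, giving the desired size bound. Now fix a query point $q$ with $k = |S\cap q| \geq C\vare'^{-2}\log n\cdot\log\log n$. The random variable $|R\cap q|$ is a sum of $k$ independent indicators, with mean $\mu = kp \geq C\vare'^{-2}\log n$. A multiplicative Chernoff bound gives
\[
\Pr\bigl[\,\bigl||R\cap q| - \mu\bigr| > \vare'\mu\,\bigr] \;\leq\; 2\exp(-\vare'^2\mu/3) \;\leq\; 2\exp(-C\log n/3),
\]
so for $C$ large enough this probability is at most $n^{-c}$ for any prescribed constant $c$. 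Multiplying the good event by $\log\log n$ yields $(|R\cap q|\cdot\log\log n)\in [(1-\vare')k, (1+\vare')k]$, as required.

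The only subtlety is that the lemma must hold for \emph{every} query point, whereas Chernoff only gives the statement for a fixed one. To fix this I would observe that as $q$ varies over $\IR^3$, the subset $S\cap q$ (and hence the quantity $|R\cap q|$) can only change when $q$ crosses one of the $O(n)$ axis-parallel facets bounding the $5$-sided rectangles of $S$. The arrangement of these facets partitions $\IR^3$ into $O(n^3)$ cells, on each of which the set $S\cap q$ is constant; thus there are only polynomially many combinatorially distinct queries to consider. Taking a union bound of the Chernoff estimate over these $O(n^3)$ cells, and choosing the constant $C$ large enough (e.g. so that $C/3 > 3$) so that the total failure probability is strictly less than $1/2$, we conclude that there \emph{exists} a choice of $R$ satisfying both the size bound and the approximation guarantee simultaneously for every query $q$ with $k\geq C\vare'^{-2}\log n\cdot\log\log n$. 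This proves the lemma.

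The main obstacle I anticipate is just being careful about the union bound: the bound on the number of combinatorially distinct outcomes must be polynomial in $n$, and the exponent absorbed into the constant $C$. Everything else is a routine Chernoff calculation.
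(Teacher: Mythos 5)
Your proposal is correct and follows essentially the same route as the paper: sample each rectangle independently with probability $1/\log\log n$, apply a multiplicative Chernoff bound at a fixed query (exploiting $k \geq C\vare'^{-2}\log n\cdot\log\log n$ to get polynomial failure probability), and union-bound over the $O(n^3)$ combinatorially distinct query cells to conclude a good $R$ exists. The only cosmetic difference is that you spell out the arrangement argument for the $O(n^3)$ bound and control $|R|$ via Chernoff rather than re-sampling, but the substance is identical.
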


\begin{proof}
Fix a parameter $\delta=\log\log n$. 
Choose a random sample $R$ where each object of $S$ is picked independently with probability $1/\delta$. Therefore, the expected size of $R$ is $n/\delta$
(if the size of $R$ exceeds $O(n/\delta)$, then re-sample till we get the desired size). For a given 
query $q$,  $E[|R\cap q|]=|S\cap q|/\delta=k/\delta$.
Therefore, by Chernoff bound \cite{mp95} we observe that

\[ \textbf{Pr}\left[\left|\left|R\cap q\right| - \frac{k}{\delta}\right| > \vare' \frac{k}{\delta}\right] \leq e^{-\Omega(\vare'^2 (k/\delta))} \leq e^{-\Omega(\vare'^2(C\vare'^{-2}\log n))} \leq e^{-\Omega(C\log n)} = n^{-\Omega(C)} \leq o(1/n^C)\]

\noindent
Set $C$ to be greater than $3$. There are  $O(n^{3})$  combinatorially different query points on the 
set $S$. Therefore, by union bound it follows that there exists a subset $R\subset S$ of size $O(n/\delta)$ such that 
 $|k-|R\cap q|\cdot \delta| \leq \vare' k$, for any $q$ such that $k \geq C\vare'^{-2}\log n\cdot\log\log n$.
\end{proof}

\noindent
{\bf Preprocessing steps.} We perform the following steps:
\begin{itemize}
\item  Apply Lemma~\ref{lem:sampling} on set $S$ to obtain a 
set $R$ of size $O(n/\log\log n)$. 
\item  Build the data structure of Lemma~\ref{lemma:subopt} based on set $R$ with error parameter $\vare'$.
\end{itemize}

\noindent
{\bf Query algorithm.} 
For a given a query $q$, let $\tau_R$ be the value returned by  the data structure built on $R$. 
Then we report $\tau_R\cdot\log\log n$ as the answer. 

\vgap
\noindent
{\bf Analysis.}
Since $|R|=O(n/\log\log n)$, by Lemma~\ref{lemma:subopt} the space occupied by this data structure will be $\Oe(n)$. 
The query time follows from Lemma~\ref{lemma:subopt}. Next, we will prove that 
$(1-\vare)k \leq\tau_R\cdot\log\log n \leq (1+\vare)k$. 
  
\noindent
If we knew the exact value of $|R\cap q|$, then from Lemma~\ref{lem:sampling} we can infer that:
\begin{equation}
(1-\vare')k \leq |R\cap q|\log\log n \leq (1+ \vare')k 
\end{equation}
However, by using Lemma~\ref{lemma:subopt} we only get an approximate value of $|R\cap q|$:
\begin{equation}
(1-\vare')|R\cap q| \leq \tau_R \leq (1+ \vare')|R\cap q|
\end{equation}
Combining the above two equations, it is easy to verify that 
$(1-\vare)k \leq\tau_R\log\log n \leq (1+\vare)k$, where 
$\vare=4\vare'$.
This finishes the proof of Lemma~\ref{lemma:almost-opt}.

\begin{remark}\label{rem:ac-3d-dom}
\normalfont The general technique of Aronov and Har-Peled~\cite{ah08} 
can be adapted to answer the approximate counting query for the 
colored dominance search in $\IR^3$ problem. 
Assume that we have a data structure of size $S(n)$ which can 
answer the emptiness query in $Q(n)$ time. 
Ignoring the dependence on $\vare$, the technique of 
\cite{ah08} guarantees a data structure of size $O(S(n)\log^2n)$ 
which can answer a colored approximate counting query in 
$O(Q(n)\log n)$ time ($O(\log^2n)$ emptiness structures 
are built with each of them storing $\Theta(n)$ objects in the 
worst-case). For colored dominance 
search in $\IR^3$, plugging in $S(n)=O(n)$ and $Q(n)=O(\log n)$~\cite{p08b}, 
we get a data structure which requires  $\Oe(n\log^2n)$ space and 
$\Oe(\log^2n)$ query time.
\end{remark}

\section{Reduction-I: Reporting + $C$-approximation}\label{sec:first-red}
Our first reduction states that given a 
colored reporting structure and a colored $C$-approximation 
structure, one can obtain a colored $(1+\vare)$-approximation 
structure with no additional loss of efficiency. 
We need a few definitions before stating the theorem.
A geometric setting  is {\em polynomially bounded} if there are only 
$n^{O(1)}$ possible outcomes of $S\cap q$, over all possible 
values of $q$. For example, in $1d$ orthogonal range search on $n$ points, 
there are only $\Theta(n^2)$ possible outcomes of $S\cap q$.
A function $f(n)$ is {\it converging} if $\sum_{i=0}^t n_i =n, \text{ then } \sum_{i=0}^t f(n_i) = O(f(n))$.
For example, it is easy to verify that $f(n)=n\log n$ is converging.

\begin{theorem} \label{thm::main-1} 
For a colored geometric setting, assume that we are given the following two structures: 
\begin{itemize}
\item a colored reporting structure of $\S_\rep(n)$ size which can solve a query in 
$O(\Q_\rep(n) + \kappa)$ time, where $\kappa$ is the output-size, and 
\item a colored $C$-approximation structure of $\S_\capp(n)$ size which can solve 
a query in $O(\Q_\capp(n))$ time. 
\end{itemize}
We also assume that: (a) $\S_\rep(n)$ and $\S_\capp(n)$ are converging,  
and (b) the geometric setting is polynomially bounded. 
Then we can obtain a $(1+\vare)$-approximation using a structure that requires  $\S_\eapp(n)$ space and 
	$\Q_\eapp(n)$ query time, such that  
	\begin{eqnarray} 
	    \hspace{-10mm} &&\S_\eapp(n) = O(\S_\rep(n)+ \S_\capp(n)) \label{eqn::intro-ours1-space} \\
	    \hspace{-10mm} &&\Q_\eapp(n) = O\left(\Q_\rep(n) + \Q_\capp(n) + \vare^{-2}\cdot \log n \right). \label{eqn::intro-ours1-qry}
	\end{eqnarray}
\end{theorem}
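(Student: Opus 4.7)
The plan is to use \emph{random sampling on colors}, stratified by a geometric ladder of rates, with the $C$-approximation oracle pointing to the correct rate at query time. The guiding idea is standard: if a sample $S_i$ is obtained by including each color of $S$ independently with probability $p_i$ (together with all of its points), and $\kappa_i$ denotes the number of colors of $S_i$ present in a query $q$, then $\mathbb{E}[\kappa_i] = p_i k$. A multiplicative Chernoff bound then shows that $\kappa_i / p_i$ is a $(1\pm\vare)$-approximation of $k$ with high probability whenever $p_i k = \Omega(\vare^{-2}\log n)$. The $C$-approximation tells us which $p_i$ to use, and the colored reporting structure on $S_i$ lets us compute $\kappa_i$ exactly.

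In preprocessing I would build the $C$-approximation structure on $S$, and for each $i = 0, 1, \ldots, \lceil \log n \rceil$ form $S_i$ by including each color independently with probability $p_i = 2^{-i}$ (along with all of its points), then build the given reporting structure on $S_i$. The expected number of points in $S_i$ is $n/2^i$, and by a Chernoff bound $|S_i| = O(n/2^i)$ with high probability (bad levels can be resampled). Since $\S_\rep(\cdot)$ is converging and $\sum_i n/2^i = O(n)$, the total reporting space is $\sum_i \S_\rep(n/2^i) = O(\S_\rep(n))$, which together with the $C$-approximation structure gives the required bound $\S_\eapp(n) = O(\S_\rep(n) + \S_\capp(n))$.

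Given a query $q$, I first invoke the $C$-approximation oracle in time $O(\Q_\capp(n))$ to obtain $z$ with $k \in [z, Cz]$. If $z \leq C\vare^{-2}\log n$, then $k = O(\vare^{-2}\log n)$ and I simply run the reporting structure on $S_0 = S$ to recover $k$ exactly in $O(\Q_\rep(n) + \vare^{-2}\log n)$ time. Otherwise I pick the unique $i$ for which $p_i z = \Theta(\vare^{-2}\log n)$; since $k \in [z, Cz]$, this forces $p_i k = \Theta(\vare^{-2}\log n)$ as well. I then query the reporting structure on $S_i$, obtaining $\kappa_i$ in time $O(\Q_\rep(n) + \kappa_i) = O(\Q_\rep(n) + \vare^{-2}\log n)$, and return $\kappa_i / p_i$. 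Summing gives $\Q_\eapp(n) = O(\Q_\rep(n) + \Q_\capp(n) + \vare^{-2}\log n)$, as claimed.

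The main obstacle is correctness, because the samples are fixed during preprocessing and must serve \emph{every} query. A single-query Chernoff bound gives $\Pr[|\kappa_i - p_i k| > \vare p_i k] \leq 2\exp(-\Omega(\vare^2 p_i k)) \leq n^{-\Omega(1)}$, with the exponent tunable through the constant hidden in the threshold $C\vare^{-2}\log n$. This is exactly where the \emph{polynomially bounded} hypothesis pays off: only $n^{O(1)}$ combinatorially distinct outcomes of $S\cap q$ exist, so a union bound across all queries and the $O(\log n)$ sampling levels keeps the overall failure probability at $n^{-\Omega(1)}$. Hence, with high probability a single realization of the samples is correct simultaneously for every query, completing the construction.
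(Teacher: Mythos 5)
Your proposal is correct and follows essentially the same approach as the paper: sample on colors at a geometric ladder of rates, use the $C$-approximation to pick the right level, handle small $k$ via truncated reporting, and establish existence of a good sample through Chernoff plus a union bound over the polynomially many distinct queries and the $O(\log n)$ levels. The paper's version parametrizes the ladder by the target count $z$ on a $\sqrt{C}$-geometric scale rather than by sampling rate $2^{-i}$, and it queries the reporting structure first (truncating at $\vare^{-2}\log n + 1$ colors) and only consults the $C$-approximation when $k$ is large, but these are cosmetic differences. One technical slip worth noting: you claim $|S_i| = O(n/2^i)$ with high probability "by a Chernoff bound," but the number of points contributed by one color can be $\Theta(n)$, so the summands are not bounded and Chernoff does not apply; a single heavy color sampled into $S_i$ would blow up the size. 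The paper instead uses Markov's inequality to show the size exceeds $10np_i$ with probability at most $1/10$, and combines this with the union-bounded correctness failure probability to conclude that a sample which is simultaneously small and correct for every query exists. Your resampling idea still rescues the construction, but the justification should go through Markov, and the existence argument should explicitly handle both bad events jointly rather than treating them separately.
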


\subsection{Refinement Structure}
The goal of  a refinement structure
is to convert  a constant-factor approximation of $k$ into a 
$(1+\vare)$-approximation of $k$.
\begin{lemma}\label{lem:refinement}
{\bf(Refinement structure)} Let ${\cal C}$ be the set of colors in set $S$, and 
 ${\cal C}\cap q$ be the set of colors in ${\cal C}$ present in $q$.
For a query $q$, assume we know that:
\begin{itemize}
\item $k=|{\cal C}\cap q|=\Omega(\vare^{-2}\log n)$, and 
\item $k\in [z,Cz]$, where $z$ is an integer.
\end{itemize}
Then there is a refinement structure of size $O\left(\S_\rep\left(\frac{\vare^{-2}n\log n}{z}\right)\right)$ which can report a value 
$\tau \in [(1-\vare)k,(1+\vare)k]$ in 
 $O(\Q_\rep(n) +\vare^{-2}\log n)$ time. 
\end{lemma}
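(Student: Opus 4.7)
The plan is to perform random sampling on the color space rather than on the points themselves, which is the natural way to exploit the hypothesis that both $k$ and $z$ are large. Fix $p = \Theta(\vare^{-2}\log n / z)$ with a suitably large hidden constant. Sample each color $c \in \mathcal{C}$ independently with probability $p$ to obtain a random subset $\mathcal{R}$, and let $S_{\mathcal{R}} \subseteq S$ consist of all points of $S$ whose color lies in $\mathcal{R}$. Build the given colored reporting structure on $S_{\mathcal{R}}$. Since $\mathbb{E}[|S_{\mathcal{R}}|] = pn = O(\vare^{-2} n\log n / z)$, and since one can condition on $|S_{\mathcal{R}}|$ being within a constant factor of its expectation (re-sampling otherwise), the space is $O(\mathcal{S}_\rep(\vare^{-2} n\log n / z))$, matching the claim.

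To answer a query $q$, I would run the reporting query on the structure built on $S_{\mathcal{R}}$, retrieve the colors in $\mathcal{R}\cap q$, and return $\tau = |\mathcal{R}\cap q|/p$. The correctness is a Chernoff argument: $\mathbb{E}[|\mathcal{R}\cap q|] = pk$, and since $k \geq z$ we get $pk = \Omega(\vare^{-2}\log n)$, which is precisely where the assumption $k = \Omega(\vare^{-2}\log n)$ enters (it is what makes relative-error concentration of a sample of this size possible). A standard two-sided Chernoff bound then gives $|\mathcal{R}\cap q| \in [(1-\vare)pk,(1+\vare)pk]$ with probability at least $1 - n^{-c}$ for any desired constant $c$, by picking the hidden constant in $p$ large enough. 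Since the geometric setting is polynomially bounded, there are only $n^{O(1)}$ distinct outcomes $\mathcal{C}\cap q$ to consider, so a union bound shows that a single random sample $\mathcal{R}$ simultaneously yields a $(1\pm\vare)$-approximation for \emph{every} query with positive probability; one such good sample is fixed at preprocessing time. Scaling by $1/p$ gives $\tau \in [(1-\vare)k,(1+\vare)k]$.

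For the query time, the upper hypothesis $k \leq Cz$ forces $pk = O(\vare^{-2}\log n)$, and by the same Chernoff/union-bound argument $|\mathcal{R}\cap q| = O(\vare^{-2}\log n)$ with high probability (the same good sample guarantees both the upper and lower bounds). Therefore the reporting query returns at most $O(\vare^{-2}\log n)$ colors, for a total cost of $O(\mathcal{Q}_\rep(n) + \vare^{-2}\log n)$. The main technical step I expect to have to tune is the interplay between the hidden constants in $p$ and in the Chernoff bound: the same constant must be large enough to (i) give error $\vare$ rather than a larger constant, (ii) beat the $n^{O(1)}$ union bound over query outcomes, and (iii) keep the sampled set size within the advertised $O(\vare^{-2} n\log n/z)$ bound after conditioning. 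All three requirements are absorbed into a single large-enough constant in $p$, after which the bounds on $\mathcal{S}_\eapp$ and $\mathcal{Q}_\eapp$ follow directly.
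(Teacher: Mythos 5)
Your proposal is correct and follows essentially the same route as the paper: sample colors (not points) independently with probability $\Theta(\vare^{-2}\log n/z)$, build the reporting structure on the points whose color survives, answer with the rescaled count $|\mathcal{R}\cap q|/p$, and use a Chernoff bound plus a union bound over the $n^{O(1)}$ combinatorially distinct queries (this is exactly where the polynomially-bounded hypothesis enters) to fix one sample that works for all queries, with the lower bound $k\geq z$ giving relative-error concentration and the upper bound $k\leq Cz$ capping the output size at $O(\vare^{-2}\log n)$. The only cosmetic difference is that the paper packages the existence argument as a separate lemma and controls the sampled-set size via Markov's inequality rather than your ``condition and re-sample'' phrasing; the two are interchangeable here.
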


The following lemma states that sampling colors (instead of input objects) is a 
useful approach to build the refinement structure. 

\begin{lemma}\label{lem:color-sampling-1}
Consider a query $q$ which satisfies the two conditions stated in Lemma~\ref{lem:refinement}. 
Let $c_1$ be a sufficiently large constant 
and $c$ be another constant s.t. $c=\Theta(c_1\log e)$.
Choose a random sample $R$ where each color in ${\cal C}$ is picked independently with 
probability $M= \frac{c_1\vare^{-2}\log n}{z}$. Then with probability $1-n^{-c}$ we have  
$\left|k-\frac{\left|R\cap q\right|}{M}\right| \leq \vare k$. 
\end{lemma}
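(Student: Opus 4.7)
The plan is to apply a straightforward multiplicative Chernoff bound on indicator variables for the colors that are actually present in $q$. First, I would enumerate ${\cal C}\cap q = \{c_1,\ldots,c_k\}$ and, for each $i$, let $X_i$ be the indicator that color $c_i$ is selected into $R$. By the sampling rule these are independent Bernoulli variables with success probability $M$, and $|R\cap q| = \sum_{i=1}^k X_i$, with expectation $\mu = kM$. Note that the colors outside ${\cal C}\cap q$ contribute nothing to $|R\cap q|$, so only the $k$ relevant indicators matter.

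Next, I would invoke the multiplicative Chernoff bound to get
\[
\Pr\bigl[\,\bigl||R\cap q| - kM\bigr| > \varepsilon\, kM\,\bigr] \;\leq\; 2\exp\!\bigl(-\varepsilon^2\, kM / 3\bigr).
\]
Using the hypothesis $k\ge z$ together with the definition $M = c_1\varepsilon^{-2}\log n / z$ gives $kM \ge zM = c_1\varepsilon^{-2}\log n$, so the exponent becomes at most $-c_1(\log n)/3$. Converting to base $n$, this is $2 n^{-c_1\log e /3}$. Choosing $c=\Theta(c_1\log e)$ slightly smaller than $c_1\log e /3$ absorbs the leading factor of $2$ and yields the claimed $n^{-c}$ bound. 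Finally, dividing the inequality $\bigl||R\cap q|-kM\bigr|\le\varepsilon kM$ by $M$ turns it into $\bigl|\,|R\cap q|/M - k\,\bigr|\le\varepsilon k$, which is exactly the conclusion of the lemma.

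The only point requiring a little care is checking that $M\le 1$, so that the sampling procedure is well-defined. Since $k\le Cz$ and $k=\Omega(\varepsilon^{-2}\log n)$, we get $z=\Omega(\varepsilon^{-2}\log n / C)$, and hence $M = O(c_1 C)$; taking the hidden constant in the $\Omega(\cdot)$ assumption on $k$ large enough (relative to $c_1$ and $C$) forces $M\le 1$. If not, the conclusion is trivial because one may simply take $R={\cal C}$ and recover $k$ exactly. I do not foresee any genuine obstacle here; the whole argument is a textbook Chernoff computation, and the only non-mechanical ingredient is the observation that sampling \emph{colors} (rather than input objects) makes the indicators $X_i$ independent and directly yields $E[|R\cap q|] = kM$ without the decomposability issues flagged earlier in the paper.
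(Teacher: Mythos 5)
Your proof matches the paper's almost exactly: both define indicator variables $X_i$ for the $k$ colors present in $q$, observe that $E[|R\cap q|] = kM$, apply a multiplicative Chernoff bound, and use $k \ge z$ to lower-bound $kM \ge zM = c_1\vare^{-2}\log n$, giving the $n^{-c}$ failure probability. Your version is slightly more careful about constants (the leading factor of $2$ and the $/3$ in the exponent) and explicitly checks that $M \le 1$, but these are minor refinements of the same argument.
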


\begin{proof}
For each of the $k$ colors which are present in $q$, 
define an indicator variable $X_i$. Set $X_i=1$, if the 
corresponding color is in the random sample $R$. 
Otherwise, set $X_i=0$. Then $|R \cap q|=\sum_{i=1}^k X_i$ and 
$E[|R \cap q|]=k\cdot M$. By Chernoff bound, 
\begin{align*}
\textbf{Pr}\Bigg[\Big||R \cap q| -E[|R\cap q|]\Big|>\vare\cdot E[|R \cap q|]\Bigg] 
<\text{exp}\Big(-\vare^{2}E[|R\cap q|]\Big) \\
<\text{exp}\left(-\vare^{2}\cdot kM\right) < \text{exp}\left(-\vare^{2}zM\right) 
<\text{exp}\left(-c_1\log n\right) \leq \frac{1}{n^{c}} 
\end{align*}
Therefore, with high probability $\Big||R \cap q| -kM\Big|\leq\vare\cdot kM$.
\end{proof}

\begin{lemma}\label{lemma:r-exists}
{\bf (Finding a suitable $R$)} Pick a random sample $R$ as defined in Lemma~\ref{lem:color-sampling-1}.
Let $n_R$ be the number of objects of $S$ whose color belongs to $R$.
We say $R$ is {\em suitable} if it satisfies the following two conditions:
\begin{itemize}
\item $\left|k-\frac{\left|R\cap q\right|}{M}\right| \leq \vare k$ for all  queries which have $k=\Omega(\vare^{-2}\log n)$.
\item $n_R \leq 10nM$. This condition is needed to bound the size of the data structure.
\end{itemize}
A suitable $R$ always exists.
\end{lemma}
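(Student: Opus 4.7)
The plan is a standard probabilistic method argument, combining Lemma~\ref{lem:color-sampling-1} with a Markov-type bound on $n_R$.

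First I would handle condition~$1$. A single draw of $R$ as in Lemma~\ref{lem:color-sampling-1} satisfies $\bigl|k-|R\cap q|/M\bigr|\le \vare k$ for a \emph{fixed} query $q$ (with $k=\Omega(\vare^{-2}\log n)$) with failure probability at most $n^{-c}$. Since the geometric setting is polynomially bounded, there are only $n^{O(1)}$ combinatorially distinct outcomes $S\cap q$, so by a union bound the failure probability over \emph{all} such queries is at most $n^{O(1)-c}$. Choosing the constant $c$ in Lemma~\ref{lem:color-sampling-1} large enough (which only requires $c_1$ to be sufficiently large), this quantity is at most, say, $1/2$.

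Next I would bound $n_R$. For each object $s\in S$, the indicator that $s$'s color is in $R$ has expectation $M$, so by linearity $E[n_R]=nM$. Markov's inequality then gives $\Pr[n_R>10nM]\le 1/10$.

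Finally, I would combine the two bounds by the probabilistic method: the probability that a random $R$ fails either condition is at most $1/2+1/10<1$, hence with positive probability a random $R$ satisfies both conditions simultaneously. In particular, at least one suitable $R$ exists.

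I do not expect any real obstacle here — the main thing to be careful about is that the polynomial bound on the number of distinct queries is exactly what licenses the union bound, and that $E[n_R]=nM$ (as opposed to something depending on color-class sizes) because we are sampling colors independently and summing the indicator per object. The constants ($1/2$ and $1/10$) are arbitrary and only need to sum to something strictly less than $1$.
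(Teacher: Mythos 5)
Your proof is correct and follows the same route as the paper's: a union bound over the polynomially many combinatorially distinct queries (using Lemma~\ref{lem:color-sampling-1} with $c$ chosen large enough), a Markov bound on $n_R$ from $E[n_R]=nM$, and then the probabilistic method to conclude existence. The only cosmetic difference is that the paper works out the union-bound failure probability to $1/n$ rather than $1/2$, but both suffice.
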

\begin{proof}
Let $n^{\alpha}$ be the number of combinatorially different queries $q$ 
 on the set $S$. From Lemma~\ref{lem:color-sampling-1}, 
by setting $c=\alpha+1$, we can conclude that 
$\tau \longleftarrow \frac{\left|R\cap q\right|}{M}$ will lie in 
the range $[(1-\vare)k,(1+\vare)k]$  with 
probability at least $1-1/n^{\alpha+1}$. By the standard union bound, 
it implies that the probability 
of the  random sample $R$ failing for any query is at most $1/n^{\alpha+1}\times n^{\alpha}=1/n$.

Next, it is easy to observe that the {\em expected} value of $n_R$ is $nM$: 
Let $n_c$ be the number of objects of $S$ having color $c$. Then 
$\textbf{E}[n_R]=\sum_{\forall c}n_c\cdot M=nM$. 
By Markov's inequality, the probability of $n_R$ being larger than $10nM$ is 
less than or equal to $1/10$.  
By union bound, $R$ will be not be suitable with probability $\leq 1/n+1/10$. 
Therefore, with probability $\geq 9/10-1/n$, $R$ will be suitable and hence, we are done.
We do not discuss the preprocessing time here, since it is not known how to {\it efficiently}
verify if a sample $R$ is suitable.
We leave this as an interesting open problem. 
\end{proof}

\paragraph{Refinement structure and query algorithm.}
In the preprocessing stage  pick a random sample 
$R\subseteq {\cal C}$ as stated in Lemma~\ref{lem:color-sampling-1}.
If the sample $R$ is {\it not suitable}, then discard $R$ 
and re-sample, till we get a suitable sample. Based on all the 
objects of $S$ whose color belongs to $R$,  build a colored 
reporting structure. Given a query  $q$, the 
colored reporting structure is queried to compute $|R\cap q|$.
We report $\tau \longleftarrow \left(|R\cap q|/M\right)$ as the final answer. 
The query time is bounded by $O(\Q_\rep(n) + \vare^{-2}\log n)$, 
since by Lemma~\ref{lem:color-sampling-1}, $|R\cap q| \leq (1+\vare)\cdot kM =O(\vare^{-2}\log n)$.
This finishes the description of the refinement structure.

\subsection{Overall solution}

\paragraph{Data structure.} The data structure consists of the following three components:
\begin{enumerate}
\item {\em Reporting structure.} Based on the set $S$ we build  a colored reporting structure.
This occupies $O(\S_\rep(n))$ space.
\item {\em $\sqrt{C}$-approximation structure.} Based on the set $S$ we build a $\sqrt{C}$-approximation  
structure. The choice of $\sqrt{C}$ will become clear in the analysis. 
This occupies $O(\S_\capp(n))$ space.

\item {\em Refinement structures.} Build the
refinement structure of Lemma~\ref{lem:refinement} for the values 
$z=(\sqrt{C})^i\cdot \vare^{-2}\log n, \forall i\in \left[0, 
\log_{\sqrt{C}}\left(\left\lceil \vare^{2}n\right\rceil
\right)\right]$. The total size of all the 
refinement structures will be $\sum O\left(\S_\rep(nM)\right) = O(\S_\rep(n))$, since 
$\S_\rep(\cdot)$ is converging and $\sum nM=O(n)$. Note that our choice of 
$z$ ensures that the size of the data structure is independent of $\vare$.
\end{enumerate}

\paragraph{Query algorithm.} The query algorithm performs the following steps:
\begin{enumerate}
\item 
Given a query object $q$,  the colored reporting structure  reports the colors present in $S\cap q$ till 
all the colors have been reported or $\vare^{-2}\log n +1$ 
colors have been reported. If the first event happens, then the exact value of $k$ is reported.  
Otherwise, we  conclude that $k=\Omega(\vare^{-2}\log n)$.
This  takes $O(\Q_\rep(n) + \vare^{-2}\log n)$ time.
\item If $k > \vare^{-2}\log n$, then
\begin{enumerate}
\item First,  query the  $\sqrt{C}$-approximation structure. Let $k_a$ be the $\sqrt{C}$-approximate value returned s.t.
$k\in [k_a, \sqrt{C}k_a]$. This takes $O(\Q_\capp(n))$ time.
\item Then query the refinement structure with the largest value 
of $z$ s.t. $z\leq k_a  \leq \sqrt{C}z$. It is trivial to verify that 
$k  \in [z,Cz]$. This takes $O(\Q_\rep(n) +\vare^{-2}\log n)$ time.
\end{enumerate}
\end{enumerate}

 \section{Reduction-II: Using Only Reporting Structure}\label{sec:second-red}

In this section we will present our second general reduction. 
The reader is assumed to be familiar with Section~\ref{sec:first-red}.

\begin{theorem} \label{thm:accq} 

For a given colored geometric setting, assume that we are 
given  a colored reporting structure of $\S_\rep(n)$ size which can answer the query in 
$O(\Q_\rep(n) + \kappa)$ time. We also assume that: (a) $\S_\rep(n)$  is  converging,  
and (b) the geometric setting is {\em polynomially bounded}. 
Then we can obtain a $(1+\vare)$ approximation using a structure which requires  $\S_\eapp(n)=O(\S_\rep(n))$ space and 
	\begin{eqnarray} 
	    \label{eqn:insensitive}\hspace{-10mm} &&\Q_\eapp(n) = O\bigg(\big(\Q_\rep(n) +  \vare^{-2}\cdot \log n \big)
	    \cdot \log(\log_{1+\vare} |{\cal C}|)\bigg)
	\end{eqnarray}
\noindent
query time, where ${\cal C}$ is the number of colors in $S$.	
\end{theorem}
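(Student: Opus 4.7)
The plan is to adapt the Aronov--Har-Peled~\cite{ah08} reduction, replacing emptiness calls by reporting calls and sampling over colors rather than over objects. I would build a nested geometric hierarchy of color samples $\mathcal{C} = R_0 \supseteq R_1 \supseteq \cdots \supseteq R_L$ with $L = \lceil \log_2 |\mathcal{C}| \rceil$, where each color of $R_{i-1}$ is retained in $R_i$ independently with probability $1/2$. Let $S_i \subseteq S$ consist of those objects whose color lies in $R_i$, and build the given colored reporting structure on each $S_i$. Since $\mathbb{E}[|S_i|] = n/2^i$, the converging property of $\S_\rep$ yields expected total space $\sum_i \S_\rep(n/2^i) = O(\S_\rep(n))$, and a resampling argument in the spirit of Lemma~\ref{lemma:r-exists} (Markov's inequality plus a union bound over the $n^{O(1)}$ distinct query answers granted by polynomial boundedness) produces a single realization of the hierarchy whose realized sizes respect this bound and which is simultaneously ``suitable'' for every query.

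For a query $q$, fix $T = c\vare^{-2}\log n$ with $c$ a sufficiently large constant. I would first run the reporting structure on $S_0 = S$, aborting after $T+1$ distinct colors have been listed; if at most $T$ appear, the exact $k$ is returned in $O(\Q_\rep(n) + \vare^{-2}\log n)$ time, so henceforth assume $k > T$. Because the samples are nested, $|R_i \cap q|$ is monotone non-increasing in $i$, so I would binary search over $i \in [0,L]$ for the largest index $i^*$ with $|R_{i^*} \cap q| > T/2$, probing each level by running its reporting structure with the same $T{+}1$ cap (so each probe yields either an exact small count or the binary signal ``more than $T$''). The binary search uses $O(\log L) = O(\log\log |\mathcal{C}|) \leq O(\log \log_{1+\vare} |\mathcal{C}|)$ probes, each costing $O(\Q_\rep(n) + \vare^{-2}\log n)$, and returns $\tau = 2^{i^*} \cdot |R_{i^*} \cap q|$ as the final estimate.

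Correctness comes from invoking Lemma~\ref{lem:color-sampling-1} at the located level. The bracket $|R_{i^*}\cap q| > T/2$ together with $|R_{i^*+1}\cap q| \leq T$ forces $k/2^{i^*} = \Theta(\vare^{-2}\log n)$, which is exactly the regime where the Chernoff tail gives $2^{i^*} \cdot |R_{i^*}\cap q| \in [(1-\vare)k, (1+\vare)k]$ except with probability $n^{-\Omega(c)}$. A union bound over the $n^{O(1)}$ combinatorially distinct queries and the $L = O(\log n)$ levels then shows, for $c$ large enough, that a single draw of the hierarchy suffices for all queries simultaneously, which is precisely what ``suitable'' means in the preprocessing step.

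The main obstacle will be the two-sided requirement pinning down the located level: Chernoff concentration demands $k/2^{i^*} = \Omega(\vare^{-2}\log n)$ while efficient enumeration per probe demands $k/2^{i^*} = O(\vare^{-2}\log n)$, and both must hold at a single level inferred from only the coarse ``above $T$'' vs.\ ``at most $T$'' signal at each probe. Arguing that the binary search decision rule always delivers such a level, while keeping the accompanying Chernoff union bound tight enough to survive quantification over all levels and all queries, is the delicate part; the remainder is a repackaging of the color-sampling machinery already developed for Section~\ref{sec:first-red}.
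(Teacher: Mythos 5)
Your route is genuinely different from the paper's, and the difference is instructive. The paper builds $W=O(\log_{1+\vare}|\mathcal{C}|)$ \emph{decision structures} $\mathcal{D}_i$, one per threshold $z_i = c_1(\vare^{-2}\log n)(1+\vare)^i$, and the $i$-th color sample keeps each color with a probability $M_i = c_1\vare^{-2}\log n / z_i$ tuned to its own $z_i$. A probe of $\mathcal{D}_i$ enumerates up to the same fixed, level-independent cap $c_1\vare^{-2}\log n$ and returns only the bit ``$k\ge z_i$'' or ``$k< z_i$''; the final answer is the threshold value $z_{i^*}$ at which the bit flips. Because the $z_i$'s are geometrically spaced by factor $(1+\vare)$, returning $z_{i^*}$ is already a $(1+\vare)$-approximation, and no scaled count is ever read off a sample. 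You instead build a single factor-$2$ nested hierarchy and return the scaled observed count $2^{i^*}|R_{i^*}\cap q|$, letting Chernoff concentration at the located level supply the $(1+\vare)$ guarantee. Your version would match the space bound and even give a marginally sharper query time ($\log\log|\mathcal{C}|$ in place of $\log\log_{1+\vare}|\mathcal{C}|$), but it reopens precisely the two-sided tension you flag at the end, which the paper's design dissolves by construction.

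There is a real gap to close before your argument is complete. To output $2^{i^*}|R_{i^*}\cap q|$ you need $|R_{i^*}\cap q|$ \emph{exactly}, but the cap $T+1$ yields the exact count only when $|R_{i^*}\cap q|\le T$; the binary search guarantees only $|R_{i^*}\cap q|>T/2$ together with $|R_{i^*+1}\cap q|\le T/2$, and under the Chernoff bracket at levels $i^*$ and $i^*+1$ the count at level $i^*$ can be as large as roughly $\frac{1+\vare}{1-\vare}\,T\approx 3T$, so the enumeration cap must be $\Theta(T)$ with a larger constant. The inference $k/2^{i^*}=\Theta(\vare^{-2}\log n)$ also needs Chernoff to hold at \emph{both} $i^*$ and $i^*+1$, and Chernoff at $i^*+1$ presupposes $k/2^{i^*+1}=\Omega(\vare^{-2}\log n)$ --- a bootstrap you must discharge by choosing $c$ large enough that ``$|R_{i^*}\cap q|>T/2$'' together with Chernoff at $i^*$ already forces this. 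Finally, your suitability event needs an explicit upper-tail condition at the first level where $k/2^i$ drops below the concentration threshold, to rule out the binary search overshooting into a regime where the two-sided bound no longer holds. None of these is fatal, but until they are spelled out the correctness proof is incomplete; the paper's decision-structure formulation is engineered so that none of them arises, since each probe asks a one-bit question against a fixed cap and no numerical estimate is ever formed from a sample.
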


Similar to Section~\ref{sec:first-red}, a colored reporting structure will be built on $S$ 
to either report the exact value of  $|{\cal C}\cap q|$ or report that $|{\cal C}\cap q|$ is greater than 
$\vare^{-2}\log n $. From now on we will assume that $k=|{\cal C}\cap q|=\Omega(\vare^{-2}\log n)$.

\subsection{Decision structure}\label{decision}

\begin{lemma}\label{lem:decision}
{\bf(Decision structure)} Let $z=\Omega(\vare^{-2}\log n)$ be a pre-specified parameter.
Given a query $q$, the decision structure reports whether 
$|{\cal C}\cap q| \geq z$ or $|{\cal C}\cap q| < z$. 
The data structure is allowed to make a mistake when 
$|{\cal C} \cap q| \in [(1-\vare)z,(1+\vare)z]$.
There is a decision structure of size $O\left(\S_\rep\left(\frac{\vare^{-2}n\log n}{z}\right)\right)$ 
which can answer the query in $O(\Q_{rep}(n) +\vare^{-2}\log n)$ time.
\end{lemma}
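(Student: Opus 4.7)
}
My plan is to mimic the color-sampling approach used for the refinement structure (Lemma~\ref{lem:refinement}), but specialize it to a threshold test rather than a $(1+\vare)$-approximation. Concretely, set the sampling rate
\[ M \;=\; \frac{c_1\,\vare^{-2}\log n}{z}, \]
for a sufficiently large constant $c_1$, and form a subset $R\subseteq{\cal C}$ by including each color of $\cal C$ independently with probability $M$. Let $S_R\subseteq S$ be the set of input objects whose color lies in $R$, and build the assumed colored reporting structure on $S_R$. The decision algorithm will, given $q$, run the reporting structure on $S_R$ to enumerate the colors of $R\cap q$, but will stop as soon as it has produced $T := zM = c_1\vare^{-2}\log n$ distinct colors; it answers YES if it reaches the threshold and NO otherwise.

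The analysis is a Chernoff + union-bound argument parallel to Lemmas~\ref{lem:color-sampling-1} and~\ref{lemma:r-exists}. Writing $k=|{\cal C}\cap q|$, we have $E[|R\cap q|]=kM$. When $k\geq(1+\vare)z$, the expectation is at least $(1+\vare)T=\Omega(\vare^{-2}\log n)$, so a multiplicative Chernoff bound with deviation $\vare/2$ gives
\[ \Pr\bigl[\,|R\cap q|<T\,\bigr] \;\leq\; \exp\!\bigl(-\Omega(\vare^{2}\cdot kM)\bigr) \;\leq\; \exp\!\bigl(-\Omega(c_1\log n)\bigr) \;\leq\; n^{-c}, \]
for any constant $c$ of our choosing by making $c_1$ large. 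The symmetric bound handles $k\leq(1-\vare)z$. Since the geometric setting is polynomially bounded, there are only $n^{O(1)}$ combinatorially distinct queries, and a union bound shows that with constant probability a single sample $R$ is ``accurate'' at every such query (outside the $[(1-\vare)z,(1+\vare)z]$ slack band). The second condition of Lemma~\ref{lemma:r-exists} transfers verbatim: $E[|S_R|]=nM$, so by Markov's inequality $|S_R|\leq 10nM$ with probability at least $9/10$. Intersecting the two events and resampling if necessary yields a sample that is both accurate and size-bounded.

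For the data-structure accounting, the reporting structure is built on at most $10nM$ objects, so its space is $O\bigl(\S_\rep(nM)\bigr)=O\bigl(\S_\rep(\vare^{-2}n\log n/z)\bigr)$, as claimed. For the query time, we never report more than $T+1=O(\vare^{-2}\log n)$ colors before halting, so the cost is $O(\Q_\rep(n)+\vare^{-2}\log n)$.

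The only real subtlety is the early-termination step: the reporting structure delivers colors in $O(\Q_\rep(n)+\kappa)$ total time, and we need to interrupt it once $\kappa$ reaches $T$ without paying for the remainder of $R\cap q$. This is standard for output-sensitive colored reporting structures (they produce colors one at a time), but it is the one place where the argument depends on the assumed structure being ``proper'' output-sensitive rather than only reporting its answer in bulk. Apart from this, the proof is purely a specialization of the sampling lemma; the ``mistake'' allowance in the band $[(1-\vare)z,(1+\vare)z]$ is exactly what lets us use a hard threshold of $T$ on $|R\cap q|$ without needing a multiplicative approximation.
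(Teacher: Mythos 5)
Your proposal matches the paper's proof essentially step for step: sample colors at rate $M=c_1\vare^{-2}\log n/z$, prove a two-sided Chernoff/union-bound fact (the paper's Lemma~\ref{lem:failure}), argue existence of a suitable sample via Markov on $n_R$ plus the union bound, build the reporting structure on $S_R$, and threshold the output at $zM=c_1\vare^{-2}\log n$ colors. The only detail the paper handles that you do not mention is that its Chernoff calculation assumes $\vare\le 1/2$ and the range $\vare\in(1/2,1]$ is handled separately by replacing $\vare$ with $\vare_{new}=1/2$; this is minor and does not affect the correctness of your argument.
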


In this subsection we will prove the above lemma. 
A few words on the intuition behind the solution. Suppose each color in ${\cal C}$ is sampled 
with probability $\approx (\log n)/z$. For a given query $q$, if $k < z$ (resp., $k > z$), then the expected number of 
colors from ${\cal C} \cap q$ sampled will be less than $\log n$ (resp., greater than $\log n$). 
 We will start by proving the following lemma.

\begin{lemma}\label{lem:failure}
Let $c_1$ be a sufficiently large constant  
and $c$ be another constant s.t. $c=\Theta(c_1\log e)$.
Consider a random sample $R$ where each color in ${\cal C}$ is picked independently with 
probability $M= \frac{c_1\vare^{-2}\log n}{z}$, where $\vare \in (0,1/2]$. Then 
 \[\textbf{Pr}\bigg[|R\cap q| > zM \hspace{2 mm} \bigg|\hspace{2 mm} k \leq (1-\vare)z\bigg] \leq \frac{1}{n^{c}}.\]
Similarly, 
\[\textbf{Pr}\bigg[|R\cap q| \leq zM \hspace{2 mm}\bigg|\hspace{2 mm} k \geq (1+\vare)z\bigg] \leq \frac{1}{n^{c}}\]  
\end{lemma}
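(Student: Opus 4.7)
The plan is to view $|R \cap q|$ as the sum of $k$ independent Bernoulli$(M)$ random variables, one per color of ${\cal C} \cap q$, so that $\mathbf{E}[|R \cap q|] = kM$, and then apply multiplicative Chernoff inequalities to each tail. The sampling rate $M = c_1 \vare^{-2}(\log n)/z$ is calibrated precisely so that $\vare^{2} z M = c_1 \log n$, giving a Chernoff exponent of order $c_1 \log n$; the constant $c = \Theta(c_1 \log e)$ then arises from converting $\exp(-\Theta(c_1 \log n))$ into $n^{-c}$ via the base change between the natural logarithm in Chernoff and the (base-$2$) logarithm in the lemma statement.

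For the first inequality I would condition on $k \leq (1-\vare)z$ and handle the variable value of $k$ by stochastic domination: pad the sum with additional independent Bernoulli$(M)$ variables to obtain a new variable $Y$ with exactly $\lceil (1-\vare)z \rceil$ summands, so that $|R \cap q| \leq Y$ pointwise and $\mu := \mathbf{E}[Y] \leq (1-\vare)zM$. It then suffices to bound $\textbf{Pr}[Y > zM]$. Writing $zM = (1+\delta)\mu$ gives $\delta = \vare/(1-\vare) \geq \vare$, and the upper-tail bound $\textbf{Pr}[Y > (1+\delta)\mu] \leq \exp(-\delta^{2}\mu/(2+\delta))$ simplifies, using $\vare \leq 1/2$, to an exponent of at least $\vare^{2} zM/(2-\vare) \geq \vare^{2}zM/2 = (c_1/2)\log n$, producing the required $n^{-c}$.

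For the second inequality no padding is needed. With $\mu = kM \geq (1+\vare)zM$, I would write $zM = (1-\delta)\mu$, note $\delta = 1 - zM/\mu \geq \vare/(1+\vare) \geq \vare/2$, and apply the lower-tail Chernoff bound $\textbf{Pr}[|R \cap q| \leq (1-\delta)\mu] \leq \exp(-\delta^{2}\mu/2)$; the exponent is at least $\vare^{2}zM/8 = (c_1/8)\log n$, again giving $n^{-c}$. The only mildly delicate point in the whole argument is the conditioning in the first case (where the sum $|R \cap q|$ has a variable number of summands and the target $zM$ exceeds its mean), which is why I use stochastic domination; an equivalent route is to observe that the upper tail is monotone non-decreasing in $k$, so bounding at $k = \lfloor(1-\vare)z\rfloor$ suffices. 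Beyond this, both parts are routine once $M$ has been identified.
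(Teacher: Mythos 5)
Your proposal is correct and follows essentially the same route as the paper: express $|R\cap q|$ as a sum of $k$ independent Bernoulli$(M)$ indicators, reduce the worst case to the boundary $k \approx (1\mp\vare)z$ (the paper states the monotonicity of the tail in $k$ directly, where you phrase it via stochastic domination/padding, but you note these are equivalent), and apply a multiplicative Chernoff bound whose exponent equals $\Theta(\vare^2 zM) = \Theta(c_1\log n)$. The only cosmetic differences are the particular standard form of Chernoff invoked (the paper first rewrites $zM = \tfrac{1}{1-\vare}\mu \geq (1+\vare)\mu$ and uses $\exp(-\vare^2\mu/4)$, while you compute $\delta$ exactly and use $\exp(-\delta^2\mu/(2+\delta))$) and that the paper omits the second tail, declaring it ``similar,'' whereas you carry it through.
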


\begin{proof}
For each of the $k$ colors present in $q$, 
define an indicator variable $X_i$. Set $X_i=1$ if the 
corresponding color is in the random sample $R$.
Otherwise, set $X_i=0$. Then $|R \cap q|=\sum_{i=1}^k X_i$ and 
$E[|R \cap q|]=k\cdot M$. For the sake of brevity, let 
$Y=|R\cap q|$. We only prove the first fact here. The proof for the second fact is similar. Let

\vgap
\noindent
 \[\alpha = \textbf{Pr}\bigg[Y > zM \hspace{2 mm} \bigg|\hspace{2 mm} k \leq (1-\vare)z\bigg]  \]
 The value $\alpha$ is maximized when $k=(1-\vare)z$. Therefore, 
 \[\alpha \leq \textbf{Pr}\bigg[Y > zM \hspace{2 mm} \bigg|\hspace{2 mm} k = (1-\vare)z\bigg] \]
 In this case, $\textbf{E}[Y]=kM=(1-\vare)zM$. Therefore,
\begin{align*}
\alpha&\leq \textbf{Pr}[Y>zM] = \textbf{Pr}\left[Y> \frac{1}{1-\vare}\textbf{E}[Y]\right] \leq \textbf{Pr}\left[Y> (1+\vare)\textbf{E}\left[Y\right]\right] \\
&\leq \text{exp}\left(-\frac{\vare^2\textbf{E}[Y]}{4}\right) \quad \quad \text{By Chernoff bound}\\
&=\text{exp}\left(-\vare^2(1-\vare)z\left(\frac{c_1\vare^{-2}\log n}{4z}\right)\right) =\text{exp}\left(-c_1(1-\vare)\frac{\log n}{4}\right)
\leq \text{exp}\left(-\frac{c_1}{8}\log n\right)  \quad \text{ since } \vare \leq 1/2\\
&\leq \frac{1}{n^{c}}
 \end{align*}
\noindent

\end{proof}

\begin{lemma}
 Let $z=\Omega(\vare^{-2}\log n)$ be a pre-specified parameter.
Using notation from Section~\ref{sec:first-red}, 
a sample $R \subseteq {\cal C}$ is called {\it suitable} if 
\begin{itemize}
\item For all queries, (a) if $k < (1-\vare)z$ then $|R\cap q| < c_1\vare^{-2}\log n$, 
and (b) if $k \geq (1+\vare)z$ then $|R\cap q| \geq c_1\vare^{-2}\log n$.
\item $n_R \leq 10nM$.
\end{itemize}

Such an $R$ always exists.
\end{lemma}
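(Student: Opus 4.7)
The plan is to follow the same probabilistic template as Lemma~\ref{lemma:r-exists}: sample $R$ at rate $M=c_1\vare^{-2}(\log n)/z$, invoke Lemma~\ref{lem:failure} for each fixed query, then take a union bound across all combinatorially distinct queries, and finally use Markov's inequality to control $n_R$. Showing that the joint probability of \emph{all} failure events is strictly less than $1$ gives the existence of a suitable $R$.

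First I would fix a random sample $R$ where each color of ${\cal C}$ is independently included with probability $M$. For a single fixed query $q$, note that the threshold $zM = c_1 \vare^{-2} \log n$ is precisely the cutoff appearing in the definition of suitability. Lemma~\ref{lem:failure} therefore tells me that, conditionally on $k<(1-\vare)z$, the event $|R\cap q|\ge zM$ occurs with probability at most $1/n^{c}$, and symmetrically for the case $k\ge(1+\vare)z$. Hence for every fixed query the sample fails the first suitability condition with probability at most $2/n^{c}$.

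Next I would invoke the polynomial boundedness assumption: there are at most $n^{\alpha}$ combinatorially distinct queries on $S$ for some constant $\alpha$. Choosing $c_1$ large enough that $c = \Theta(c_1 \log e) \ge \alpha + 1$, a union bound over all these queries shows that the probability that \emph{any} query violates the first condition is at most $2n^{\alpha}/n^{c} \le 2/n$. This is the step where the parameters are tuned; I would spell out that the freedom in choosing $c_1$ is exactly what allows the union bound to absorb the polynomial number of queries.

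For the second condition, I would argue exactly as in Lemma~\ref{lemma:r-exists}: writing $n_c$ for the number of input objects of color $c$, linearity of expectation gives $\textbf{E}[n_R]=\sum_c n_c\cdot M = nM$, so by Markov's inequality $\textbf{Pr}[n_R>10nM]\le 1/10$. Combining the two bad events via a final union bound, an unsuitable $R$ occurs with probability at most $2/n+1/10<1$ for $n$ large enough, so a suitable sample must exist. The main conceptual obstacle is nothing beyond what already appears in the earlier lemma, namely that this existence argument is non-constructive; as in Lemma~\ref{lemma:r-exists}, I would note that efficiently \emph{verifying} suitability remains open and is inherited as the same caveat here.
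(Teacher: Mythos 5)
Your proof is correct and follows exactly the approach the paper intends: the paper's own proof simply says it is identical to the proof of Lemma~\ref{lemma:r-exists} with Lemma~\ref{lem:color-sampling-1} replaced by Lemma~\ref{lem:failure}, and you have spelled out that substitution in full (union bound over the polynomially many queries, Markov for $n_R$, combine). The factor of $2/n$ instead of $1/n$ from counting both failure modes is harmlessly loose, since for any fixed query only one of the two cases in Lemma~\ref{lem:failure} can apply.
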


\begin{proof}
The proof is exactly the same as the proof in Lemma~\ref{lemma:r-exists}. 
The only difference is that we replace Lemma~\ref{lem:color-sampling-1} 
with Lemma~\ref{lem:failure}. 
\end{proof}

\paragraph{Decision structure and query algorithm.}
In the preprocessing phase  pick a random sample 
$R\subseteq {\cal C}$ as stated in Lemma~\ref{lem:failure}.
If the sample $R$ is {\it not suitable}, then  discard $R$ 
and re-sample, till we get a suitable sample. Based on all the 
points of $S$ whose color belongs to $R$,  build a colored 
reporting structure. Given a query object $q$,  the 
colored reporting structure reports  $R\cap q$, 
till  all the colors  
 have been reported or $c_1\vare^{-2}\log n$ colors
have been reported.
If the first event happens, then we report $k < z$. Otherwise,
 we report $k\geq z$. The query time is bounded by $O(\Q_\rep(n) + \vare^{-2}\log n)$, 

In Lemma~\ref{lem:failure}, we assumed $\vare \in (0,1/2]$. 
Handling  $\vare \in (1/2,1]$ is easy: Set a new variable 
$\vare_{new} \longleftarrow 1/2$. The decision structure will be built with the error  parameter $\vare_{new}$ (and not $\vare$). 
Since $\vare_{new} < \vare$,  the error made by the decision structure is tolerable.
Since $\frac{1}{\vare_{new}} \leq \frac{2}{\vare}$, the space and the query time bounds are also not affected.
\subsection{Final structure}

\noindent
{\bf Data structure.}  Recall that we only have to handle $k=\Omega(\vare^{-2}\log n)$.
For the values $z_i=c_1(\vare^{-2}\log n)(1+\vare)^i$, for $i=1, 2,3,\ldots, W=O(\log_{1+\vare} |{\cal C}|)$, 
we build a decision structure ${\cal D}_i$  using Lemma~\ref{lem:decision}. 
 By performing similar analysis as in Section~\ref{sec:first-red}, 
the overall size will be  $O(\S_\rep(n))$.

\vgap
\noindent
{\bf Query algorithm.}
For a moment, assume that we query all the data structures ${\cal D}_1,\ldots,{\cal D}_W$. 
Then  we will see a sequence of structures ${\cal D}_j$ for $j\in [1,i]$ 
claiming $|{\cal C}\cap q| > z_j$, followed by a sequence 
of structures ${\cal D}_{i+1},\ldots, {\cal D}_W$ claiming  $|{\cal C}\cap q| \leq z_j$. 
Then we report $\tau\leftarrow z_i$ as the answer 
to the approximate colored counting query.  
A simple calculation reveals that $\tau$ will lie in the range $[(1-\vare)k,(1+\vare)k]$. 
We perform a binary search on ${\cal D}_1,\ldots,{\cal D}_W$ to efficiently find the index $i$. 
The query time will be $O\bigg(\big(\Q_\rep(n) +  \vare^{-2}\cdot \log n \big)
	    \cdot \log(\log_{1+\vare} |{\cal C}|)\bigg)$. 

\paragraph{Remark.} Our result is a generalization of the reduction of Aronov and Har-Peled~\cite{ah08} 
to colored problems. 
Handling ``small" values of $k$ efficiently is usually challenging, since the error tolerated is small. 
Using the reporting structure makes it easy to handle the ``small" values of $k$  
(unlike an emptiness structure which was used by \cite{ah08}). 
Random sampling and Chernoff bound are easy to apply for ``large" values of $k$. 
As a result, the analysis of our reduction is easier than \cite{ah08}.

\section{Colored Orthogonal Range Search in $\IR^2$}\label{sec:three-sided-color}
To illustrate an application of Reduction-I, we study the approximate  colored counting query 
for orthogonal range search in $\IR^2$. 

\begin{theorem} \label{thm::three-sided-color} 
Consider the following two problems:
\begin{enumerate}[label=\Alph*)]
 \item {\bf Colored $3$-sided range search in $\IR^2$.} 
In this setting, the input set $S$ is $n$ colored points in $\IR^2$ and the query $q$ 
is a $3$-sided rectangle in $\IR^2$. 
There is a data structure 
of $O(n)$ size which can answer the approximate colored counting query in $O(\vare^{-2}\log n)$  time. 
This pointer machine structure is optimal in terms of $n$.

\item {\bf Colored $4$-sided range search in $\IR^2$.} 
In this setting, the input set $S$ is $n$ colored points in $\IR^2$ and the query $q$ 
is a $4$-sided rectangle in $\IR^2$.  
There is a data structure 
of $O(n\log n)$ size which can answer the approximate colored counting query in $O(\vare^{-2}\log n)$  time. 
\end{enumerate}
\end{theorem}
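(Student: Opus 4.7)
My strategy is to apply Reduction-I (Theorem~\ref{thm::main-1}) to both parts: it suffices to provide, for each setting, a colored reporting structure and a colored $C$-approximation structure of the appropriate complexity, and the reduction glues them together with no asymptotic loss. Convergence of the space bounds $O(n)$ and $O(n\log n)$, and polynomial boundedness of orthogonal range search in $\IR^2$, are both immediate, so the hypotheses of Theorem~\ref{thm::main-1} are satisfied.

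\textbf{Reporting structures.} For the $3$-sided case, Nekrich's path-range trees~\cite{n14} already supply an $O(n)$-space colored reporting structure with $O(\log n + k)$ query time, so nothing new is needed here. For the $4$-sided case, I would put a standard range tree on the $x$-coordinates of $S$ and attach, at each node $v$, the $3$-sided colored reporting structure on $v$'s canonical subset. A query $[x_1,x_2]\times[y_1,y_2]$ is then answered at the LCA of $x_1$ and $x_2$ by one $3$-sided reporting query in the left child (unbounded to the right in $x$) and one in the right child (unbounded to the left); any color reported on both sides is detected and removed with a hash table. This yields $O(n\log n)$ space and $O(\log n + k)$ query time.

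\textbf{$C$-approximation structures.} The technical core is an $O(n)$-space, $O(\log n)$-query-time $C$-approximation structure for the colored $3$-sided problem. I would refine Nekrich's path-range trees by attaching to each tree node a short hierarchical \emph{sketch} of its canonical color set, in the spirit of the nested shallow-cutting sketches of Section~\ref{sec:many-colored}: for each node of the path-range tree we would record just enough coarse geometric summary information that, traversing the $O(\log n)$ search path induced by a $3$-sided query, a constant-factor estimate of the number of distinct colors present can be read off in $O(\log n)$ total time without enumerating them. Bounding the sketch size per color and summing across nodes keeps the total space linear. For the $4$-sided case, I reuse the same range-tree skeleton as for reporting: a query reduces to two $3$-sided $C$-approximation queries at the LCA's two children, producing estimates $z_L$ and $z_R$ whose underlying point sets $A_L, A_R$ are disjoint and whose union is the query range. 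If $k_L, k_R$ denote the true color counts on $A_L, A_R$, then every color of the query contributes to at least one and at most both halves, giving $k \le k_L + k_R \le 2k$. Hence $z_L+z_R$ is a $(2C)$-approximation of $k$, and the structure has $O(n\log n)$ space and $O(\log n)$ query time.

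\textbf{Assembly and expected obstacle.} Plugging these two pairs into Theorem~\ref{thm::main-1} yields exactly the claimed bounds: for (A), $O(n)$ space and $O(\Q_\rep(n)+\Q_\capp(n)+\vare^{-2}\log n)=O(\vare^{-2}\log n)$ time, matched from below by the known reporting/emptiness lower bound and hence optimal in $n$; for (B), $O(n\log n)$ space and the same $O(\vare^{-2}\log n)$ query time. The hardest step I expect is the $3$-sided colored $C$-approximation structure: unlike the uncolored setting, where shallow-cutting hierarchies directly yield a constant-factor approximation, non-decomposability of colored counting forces the per-node sketches to respect a global constraint — no color may be reflected in more than $O(1)$ nodes along a root-to-leaf search path — and achieving this while keeping each per-node sketch small and the total space strictly linear is the delicate design point.
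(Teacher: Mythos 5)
Your high-level framework is the right one and matches the paper: both parts are obtained by instantiating Reduction-I (Theorem~\ref{thm::main-1}) with a colored reporting structure and a colored $C$-approximation structure, and your derivation of the $4$-sided $C$-approximation from the $3$-sided one (range tree over one coordinate, sum the two $3$-sided estimates, pay a factor $2$ in the approximation constant) is exactly what the paper does. However, you leave the actual load-bearing ingredient — Lemma~\ref{lem:3-sided-c}, the $O(n)$-space, $O(\log n)$-query $C$-approximation structure for colored $3$-sided range search — entirely unconstructed. Your paragraph on it is a statement of intent (``attach a short hierarchical sketch,'' ``record just enough coarse geometric summary information'') rather than a construction, and you concede at the end that this is the ``delicate design point'' without resolving it. The paper's proof of that lemma is the real content of this section: it first reduces colored $3$-sided search in $\IR^2$ to colored dominance in $\IR^3$ and then, via the transformation of Subsection~\ref{subsec:red-colored-2}, to \emph{standard} $5$-sided rectangle stabbing; it applies the interval-tree decomposition (Lemma~\ref{lem:it}) twice to get a linear-space $2$-approximation with $O(\log^3 n)$ query time (Lemma~\ref{lem:is}); and it then bootstraps this suboptimal structure down to $O(\log n)$ query time by bucketing points into groups of size $\log^2 n$ at the leaves of a balanced tree, storing the suboptimal structure only at the leaves, and storing geometric sketches $S_l''(u,v), S_r''(u,v)$ (the $2^0, 2^1, 2^2, \ldots$-th highest distinct-color representatives) along ancestor paths. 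None of this appears in your proposal, and the constraint you hypothesize (``no color reflected in more than $O(1)$ nodes along a root-to-leaf path'') is not what the paper's construction enforces.

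Two smaller points. First, the paper credits the $O(n)$-space, $O(\log n + k)$-time colored $3$-sided reporting structure to Shi and Jaja~\cite{sj05}, not Nekrich~\cite{n14}; the paper explicitly says Nekrich's path-range tree gives a $(4+\vare)$-approximation and needs super-linear space, so your attribution is off. Second, your suggestion to deduplicate colors across the two children ``with a hash table'' is not available in the pointer machine model that the theorem is stated in; the $4$-sided colored reporting structure is taken as a known result from~\cite{sj05} rather than built this way.
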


\subsection{Colored $3$-sided range search in $\IR^2$}

We use the framework of Theorem~\ref{thm::main-1} to 
prove the result of Theorem~\ref{thm::three-sided-color}(A). 
For this geometric setting, a colored reporting structure with $\S_\rep =n$ and $\Q_\rep=\log n$ is already known \cite{sj05}. 
The path-range tree of Nekrich~\cite{n14} gives a $(4+\vare)$-approximation, 
but it requires  super-linear space. The $C$-approximation structure presented in this subsection is
 a refinement of the path-range tree for the pointer machine model.

\begin{lemma}\label{lem:3-sided-c}
For the colored $3$-sided range search in $\IR^2$ problem, there is a 
$C$-approximation structure which requires $O(n)$ space and 
answers a query in $O(\log n)$ time.
\end{lemma}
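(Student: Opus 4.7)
The plan is to refine Nekrich's path-range tree so that it runs in the pointer machine model with $O(n)$ space, at the cost of a larger (but still constant) approximation factor $C$ in place of $4+\vare$.

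I will build a balanced binary search tree $T$ on the $x$-coordinates of the points of $S$. For a 3-sided query $q=[x_1,x_2]\times[y_q,+\infty)$, walking the search paths of $x_1$ and $x_2$ yields a canonical decomposition into $O(\log n)$ subtrees $v_1,\ldots,v_L$, and a color is present in $q$ iff at least one $S_{v_i}$ contains a point of that color with $y$-coordinate $\geq y_q$. At each internal node $v$, I will store a compact secondary structure that, given a threshold $y_q$, returns a constant-factor approximation of the number of distinct colors of $S_v$ whose maximum $y$-coordinate in $S_v$ exceeds $y_q$. Because the target approximation is a constant rather than $(1+\vare)$, the sorted-by-max-$y$ list of colors can be retained only at exponentially spaced positions with ratio $\sqrt{C}$; a single predecessor step on such a sparsified list takes $O(1)$ via direct indexing, and its error is bounded by $\sqrt{C}$.

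To make the total space $O(n)$ I will apply Nekrich's charging trick: each color $c$ is assigned a unique \emph{home node} $v_c$, namely the LCA of its points in $T$, and its contribution propagates to ancestors only through an augmentation that carries $O(1)$ data per color per level of $T$ it participates in. Summed over the tree, each color contributes $O(1)$ total and the overall space is $O(n)$. The query walks both search paths, performs $O(1)$ work at each of the $O(\log n)$ canonical nodes, and sums the partial counts, yielding query time $O(\log n)$.

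The main obstacle is the over-count analysis. A color may contribute to several canonical subtrees, and the coarsened secondary structures introduce an additional bucket-alignment error at each level. I plan to control the combined over-count by a level-by-level argument: along each descent path each level hosts at most two canonical subtrees, the home-node discipline guarantees that a color whose home lies above the query can be counted in at most $O(1)$ such subtrees, and the geometric spacing inside the secondary structures multiplies each lookup by a factor of at most $\sqrt{C}$. Composing the two constants yields an over-count bounded by an absolute constant $C$, matching the statement of the lemma.
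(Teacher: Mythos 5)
Your outline diverges from the paper in a way that leaves a real gap: the over-count across canonical subtrees is not actually controlled by the home-node device you sketch.

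In a standard canonical decomposition of a $3$-sided query into $O(\log n)$ off-path subtrees, a single color can have points scattered across many of those subtrees, and per-node secondary structures will count such a color once per subtree it touches. Assigning each color a ``home node'' $v_c = \mathrm{LCA}$ of its points does not fix this: if the query range lies strictly inside, say, the left or right half of the tree, the home node of a spanning color sits on the path between the two search paths rather than inside any canonical subtree, so a structure indexed by home nodes would \emph{miss} that color, while a structure that counts every color present at a canonical node \emph{over-counts} it. Your claim that a color ``can be counted in at most $O(1)$ such subtrees'' needs a concrete mechanism and does not follow from the home-node assignment as stated. The space bound has a similar problem: per-node secondary structures with per-color maxima in the subtree is exactly the $O(n\log n)$-space flavor of range trees, and it is not clear how the charging trick drops this to $O(n)$ while still supporting de-duplicated counting at arbitrary subtree subsets of the canonical decomposition.

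The paper sidesteps both issues with a structurally different decomposition. It buckets the sorted points into groups of $\log^2 n$ and builds the tree $\mathcal{T}$ over buckets, so there are only $O(n/\log^2 n)$ leaves and hence $O(n/\log n)$ leaf--ancestor pairs $(u,v)$. For each such pair it pre-computes the set $S_l'(u,v)$ (resp. $S_r'(u,v)$) of per-color maxima over the \emph{entire} left (resp. right) off-path staircase, and then sparsifies that already-deduplicated list into a sketch of size $O(\log n)$. Because the de-duplication is done once over the whole staircase rather than per canonical node, a query only ever touches \emph{two} such sketches (plus the two leaf buckets), so there is no accumulation of error across $O(\log n)$ canonical nodes; the composition $2\cdot 2\cdot 2 = 8$ gives the constant $C$. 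The leaf buckets are small enough ($\log^2 n$ points) that the $O(\log^3 m)$-time base structure of Lemma~\ref{lem:is} runs in $O((\log\log n)^3) = o(\log n)$ time. To repair your argument you would essentially need to recreate this ``aggregate the staircase before sparsifying'' step; merging sparsified per-node lists at query time does not give the same guarantee.
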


\noindent
We prove Lemma~\ref{lem:3-sided-c} in  the rest of this subsection.

\subsubsection{Interval tree}
Our solution is based on an interval tree and  we will need the following fact about it.
\begin{lemma}\label{lem:it}
Using interval trees, a query on $(3+t)$-sided rectangles in $\IR^3$ can be 
broken down into $O(\log n)$ queries on $(2+t)$-sided rectangles in $\IR^3$. 
Here $t\in [1,3]$.
\end{lemma}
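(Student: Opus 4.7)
The plan is to apply a standard interval tree on one of the bounded coordinate axes of the input $(3+t)$-sided rectangles. Select any one of the $t$ fully-bounded dimensions, say the $x$-axis, and project each input rectangle to its $x$-interval. Build a balanced interval tree $T$ on these $n$ intervals, so $T$ has depth $O(\log n)$. By the usual interval-tree rule, each rectangle $r \in S$ is assigned to the highest node $v$ of $T$ whose median $m_v$ lies inside the $x$-projection of $r$; denote this node-set by $S_v$.

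Given a query point $q = (q_x, q_y, q_z)$, I would traverse the root-to-leaf path $\pi(q)$ of length $O(\log n)$ determined by $q_x$, and at every node $v \in \pi(q)$ answer a sub-query on $S_v$. The key observation is that every rectangle in $S_v$ has $x$-interval straddling $m_v$, so exactly one of its two $x$-boundaries is automatically satisfied by $q_x$: if $q_x \le m_v$ then the right $x$-boundary is already to the right of $q_x$ and only the left boundary matters, and symmetrically if $q_x > m_v$. In either case the effective constraint in the $x$-direction collapses from two sides to one. Counting the remaining sides gives $1 + 2(t-1) + (3-t) = 2+t$, so each sub-query at $v$ is a stabbing query on a family of $(2+t)$-sided rectangles in $\IR^3$ (using $S_v$ with the redundant $x$-boundary stripped off). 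Since $|\pi(q)| = O(\log n)$, this yields the promised $O(\log n)$ sub-queries.

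The main obstacle I anticipate is a conceptual one rather than a technical one: one has to make sure the claim really is about the geometric form of the sub-queries and not about how they combine into a final count. For colored counting a color may appear in several $S_v$'s along $\pi(q)$, so the raw answers do not simply add. However, this lemma only asserts the structural reduction of one $(3+t)$-sided query into $O(\log n)$ $(2+t)$-sided queries; the task of combining the sub-answers without double-counting colors is handled by whatever framework is plugged in on top (for example, the reporting-plus-approximation reduction of Theorem~\ref{thm::main-1}, which naturally absorbs such $O(\log n)$ blowups in query time without an extra factor in space). For the uncolored case the counts simply add, so no additional argument is needed. Finally, the range $t \in [1,3]$ is precisely what ensures that at least one bounded dimension exists to host the interval tree, and the resulting $(2+t)$-sided problem is a genuinely easier sub-problem that one of the earlier sections treats directly.
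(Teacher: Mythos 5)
Your proof is correct and is essentially the same argument as the paper's: build an interval tree on a bounded axis, assign each rectangle to the highest node whose splitting plane it straddles, walk the root-to-leaf search path of length $O(\log n)$, and observe that at each node on the path one of the two $x$-boundary constraints is automatically satisfied (the paper phrases this by splitting $r$ into $r^-,r^+$ and noting the inner boundary is ``effectively $\pm\infty$''). The aside about not double-counting colors across $S_v$'s is a sensible clarification of the lemma's scope, but it is not part of the paper's proof, which proves only the structural decomposition exactly as you do.
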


\begin{proof}
Let $R$ be a set of $n$ $(3+t)$-sided rectangles.
We build an interval tree ${\cal IT}$ as follows: W.l.o.g., assume that the rectangles are 
bounded along the $x$-axis.  Let $h$ be a plane perpendicular to the $x$-axis such that 
there are equal number of endpoints of $R$ on each side of the plane. 
The  splitting halfplane $h$ is stored at the root of ${\cal IT}$ and  the two subtrees are 
built recursively. In general,  $h(v)$ is the splitting halfplane stored at a node $v \in {\cal IT}$. 
 A rectangle $r\in R$ is stored at the highest node $v$ s.t. $r$ intersects 
$h(v)$. Let $R_v$ be the set of rectangles stored at a node $v$.
Each rectangle in $r\in R_v$ is split by $h(v)$ into two rectangles $r^-$ and $r^+$. 
Define $R_v^- := \bigcup_{r\in R_v} r^-$ and $R_v^+:=\bigcup_{r\in R_v} r^+$. 

Given a query point $q$,  trace a path $\Pi$ of length $O(\log n)$ from the root to a leaf node corresponding 
to $q$. For a node $v\in \Pi$,  if  $q$ lies to the left (resp., right) of $h(v)$, then 
answering a query on $R_v \cap q$ is equivalent to answering it on $R_v^- \cap q$ (resp., $R_v^+ \cap q$), 
and  we can treat $R_v^-$ (resp., $R_v^+$) as $(2+t)$-sided rectangles in $\IR^3$, since 
$h(v)$ is effectively $+\infty$ (resp., $-\infty$). 
\end{proof}

\subsubsection{Initial structure} 

\begin{lemma}\label{lem:is}
For the colored $3$-sided range search in $\IR^2$ problem, there is a 
$2$-approximation structure which requires $O(n)$ space and 
answers a query in $O(\log^3 n)$ time.
\end{lemma}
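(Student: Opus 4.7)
\textbf{Plan for proving Lemma~\ref{lem:is}.}
The plan is to lift the colored $3$-sided $\IR^2$ problem to an \emph{uncolored} box-counting problem in $\IR^3$ via a staircase transformation, and then apply the interval-tree reduction of Lemma~\ref{lem:it}. A key property I would engineer into the lifting is that for every query, each color contributes at most one lifted point to the transformed region, so per-node constant-factor approximations can be summed without double counting.

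First I would compute, for each color $c$, the upper--left staircase $p_{c,1},p_{c,2},\ldots$ of its points (sorted by $x$-coordinate ascending and $y$-coordinate descending), and associate to each staircase point $p_{c,\ell}=(x_\ell,y_\ell)$ the lifted point $\hat p_{c,\ell}=(x_\ell,y_\ell,x_{\ell-1})\in\IR^3$, where $x_0:=-\infty$. Let $\hat S=\bigcup_{c,\ell}\{\hat p_{c,\ell}\}$; clearly $|\hat S|=O(n)$. Using that the staircase is monotone in both coordinates, I would prove the following uniqueness statement: for any query $q=[x_1,x_2]\times[y,\infty)$, the lifted $\IR^3$ box $Q=[x_1,x_2]\times[y,\infty)\times(-\infty,x_1)$ contains \emph{exactly one} $\hat p_{c,\ell}$ for every color $c$ intersecting $q$ (namely the point of $c$ with smallest $x$ inside $q$), and none otherwise. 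Thus counting colors in $q$ equals counting lifted points in the $4$-sided (i.e.\ $(3+1)$-sided) $\IR^3$ box $Q$.

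Next, I would apply Lemma~\ref{lem:it} with $t=1$ in its dual form for point inputs: build an interval tree on the $x$-coordinates of $\hat S$, which decomposes the $(3+1)$-sided $\IR^3$ query into $O(\log n)$ subqueries of $(2+1)$-sided (octant) type, one per node on the two root-to-leaf paths followed by the query. At each tree node $v$ with $n_v$ lifted points, I would store a linear-size, constant-approximation structure for uncolored $\IR^3$ octant counting, obtained, for example, by binary-searching over a ladder of shallow cuttings; such a structure uses $O(n_v)$ space and answers a $2$-approximation query in $O(\log^2 n_v)$ time. Summing: space is $\sum_v O(n_v)=O(n)$ by the converging property of the octant structure, and query time is $O(\log n)\cdot O(\log^2 n)=O(\log^3 n)$. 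Because the uniqueness property guarantees that in $Q$ each color appears in at most one lifted point, the interval-tree decomposition of $Q$ by $x$-coordinate assigns each color to at most one subquery region, so the sub-counts sum to the exact color count; summing $2$-approximations then yields the required $2$-approximation globally.

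\emph{Main obstacle.} The most delicate part is the uniqueness property of the lifting. One must verify that the staircase predecessor $x_{\ell-1}$ correctly encodes the statement ``no lifted point of color $c$ with smaller $x$ lies in $Q$'' in the presence of \emph{off-staircase} points of color $c$ (which are dominated by staircase points and therefore harmless for the presence test) and of staircase points with $y$-coordinate below the $y$-cutoff (which must not spuriously activate the predecessor constraint). If uniqueness fails, the sum of per-node sub-counts could over- or under-estimate the true color count and the sum-of-$2$-approximations argument would collapse; getting it right is what makes the staircase, rather than the full point set of color $c$, the correct object to lift.
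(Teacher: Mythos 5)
Your main step---the staircase lifting---is incorrect. You claim that every color $c$ present in $q=[x_1,x_2]\times[y,\infty)$ contributes exactly one lifted staircase point to $Q=[x_1,x_2]\times[y,\infty)\times(-\infty,x_1)$, but a point of $c$ that is dominated by another point of $c$ (and hence never placed on the staircase, hence never lifted) can be the \emph{only} point of $c$ inside $q$, when the dominating point lies to the right of $x_2$. Concretely, let color $c$ be $\{(0,10),\,(3,5),\,(5,8)\}$. The staircase (sorted $x$ ascending, $y$ descending) is $(0,10),(5,8)$, since $(5,8)$ dominates $(3,5)$. For $q=[2,4]\times[4,\infty)$, the color is present via $(3,5)$, yet neither lifted staircase point lands in $Q=[2,4]\times[4,\infty)\times(-\infty,2)$, so your structure reports zero. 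The deeper obstruction is that for a $3$-sided query no \emph{fixed} per-color subset can serve as representatives: the ``point of $c$ with smallest $x$ in $q$'' is not determined by any static dominance relation once the upper bound $x_2$ is in play, so a single lifted point per staircase element cannot encode it.

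This is exactly why the paper does not attempt a direct per-color lifting to a point set. It instead first maps colored $3$-sided search in $\IR^2$ to colored dominance in $\IR^3$ (where all three query constraints become one-sided, so maxima-layer arguments are sound), then performs the layered vertical decomposition of Subsection~\ref{subsec:red-colored-2}, producing $O(n)$ disjoint \emph{uncolored $5$-sided rectangles}---turning the problem into rectangle \emph{stabbing} by a query point, not point counting in a query box. Lemma~\ref{lem:it} is then applied \emph{twice}, yielding $O(\log^2 n)$ uncolored $3$-sided stabbing subqueries, each answered in $O(\log n)$ time via~\cite{ahz10}, for $O(\log^3 n)$ total. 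Note also a secondary gap in your plan: the $O(n)$ space claim $\sum_v O(n_v)=O(n)$ relies on each object being stored at a \emph{single} interval-tree node. That holds in Lemma~\ref{lem:it} because each rectangle is assigned to the unique node whose splitter it crosses; for a range-tree decomposition of a box query over \emph{points}, each point lies in $\Theta(\log n)$ canonical subtrees, giving $O(n\log n)$ space. You would need to reinterpret each lifted point as the interval $(x_{\ell-1},x_\ell]$ and store it at the node where that interval crosses the splitter to recover $O(n)$ space and well-defined octant subqueries---but this is moot given that the lifting itself is unsound.
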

\begin{proof}
By a simple exercise, the colored $3$-sided range search in $\IR^2$ can 
be reduced to the colored dominance search in $\IR^3$. Therefore, using the 
reduction of Subsection~\ref{subsec:red-colored-2} the 
colored $3$-sided range search in $\IR^2$ also reduces to 
standard $5$-sided rectangle stabbing problem (for brevity, call it $5$-sided RSP).

There is a simple linear-size data structure which reports in $O(\log^3 n)$ time 
a $2$-approximation for the $5$-sided RSP: By inductively applying Lemma~\ref{lem:it} twice, 
we can decompose  $5$-sided RSP 
 to $O(\log^2n)$ $3$-sided RSPs. For $3$-sided RSP, there is a linear-size structure of 
 which reports a $2$-approximation  
 in $O(\log n)$ time~\cite{ahz10}.  By using this structure the 
 $5$-sided RSP can be solved in   $O(\log^3n)$ time. 
\end{proof}

\subsubsection{Final structure} 
Now we will present the optimal $C$-approximation structure of Lemma~\ref{lem:3-sided-c}.

{\em Structure.} Sort the points of $S$ based on their $x$-coordinate value and divide them into buckets containing 
$\log^2n$ consecutive points. Based on the points in each bucket, build a $D$-structure which is an instance of 
Lemma~\ref{lem:is}. Next, build a height-balanced binary search tree ${\cal T}$, 
where the buckets are placed at the leaves from left to right based on their ordering along the $x$-axis. 
Let $v$ be a proper ancestor of a leaf node $u$ and let $\Pi(u,v)$ be the path from $u$ to $v$ 
(excluding $u$ and $v$). Let $S_l(u,v)$ be the set of points in the subtrees rooted at nodes 
that are left children of nodes on the path $\Pi(u,v)$ but not themselves 
on the path.  Similarly, let $S_r(u,v)$ be the set of points in the subtrees rooted at nodes 
that are right children of nodes on the path $\Pi(u,v)$ but not themselves 
on the path. See Figure~\ref{fig:color-3-sided}, which illustrates these sets for two leaves 
$u=u_l$ and $u=u_r$.  
For each pair $(u,v)$, let $S_l'(u,v)$ (resp., $S_r'(u,v)$) be the set of points that each have the highest 
$y$-coordinate value among the points of the same color in $S_l(u,v)$ (resp., $S_r(u,v)$). 

Finally, for each pair $(u,v)$, construct a {\it sketch}, $S_l''(u,v)$, 
 by selecting the  $2^0, 2^1,2^2,\ldots$-th highest $y$-coordinate point 
in $S_l'(u,v)$.
A symmetric construction is performed to obtain $S_r''(u,v)$. 
The number of $(u,v)$ pairs is bounded by  $O((n/\log^2n)\times(\log n))=O(n/\log n)$ and 
hence, the space occupied by all the $S_l''(u,v)$ and 
$S_r''(u,v)$ sets is $O(n)$. 

\begin{figure}[h]
 \centering
\includegraphics[scale=1]{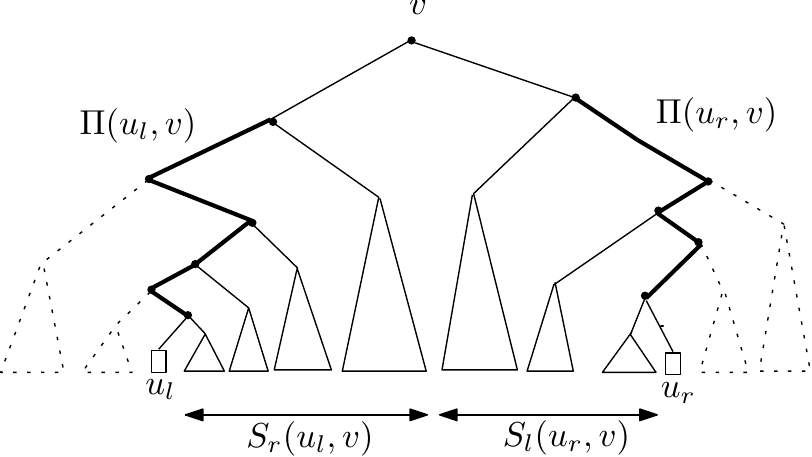}
\caption{}
\label{fig:color-3-sided}
\end{figure} 

{\em Query algorithm.} To answer a query $q=[x_1,x_2] \times [y,\infty)$, we first determine the leaf nodes 
$u_l$ and $u_r$ of ${\cal T}$ containing $x_1$ and $x_2$, respectively. 
If $u_l =u_r$, then we query the $D$-structure corresponding to the leaf node and 
we are done. If $u_l \neq u_r$, then we find the node $v$ which is the 
least common ancestor of $u_l$ and $u_r$. The query is now broken into four sub-queries:
First, report the approximate count in the leaves $u_l$ and $u_r$  by querying the $D$-structure of $u_l$ with 
$[x_1,\infty) \times [y,\infty)$ and the $D$-structure of $u_r$ with $(-\infty, x_2] \times [y,\infty)$. 
Next,  scan the list $S_r''(u_l,v)$ (resp., $S_l''(u_r,v)$) 
to find a $2$-approximation of the number of colors of $S_r(u_l,v)$ 
(resp., $S_l(u_r,v)$) present in $q$.
 
 The final answer is the sum of the count returned by the four sub-queries. 
 The time taken to find $u_l$, $u_r$ and $v$ is  $O(\log n)$. 
 Querying the leaf structures takes
 $O((\log (\log^2n))^3)=O(\log n)$ 
 time. The time taken for scanning the lists $S_r''(u_l,v)$ and $S_l''(u_r,v)$ 
 is  $O(\log n)$. Therefore, the overall query time is bounded by $O(\log n)$.
 Since each of the four sub-queries give a $2$-approximation, overall we get a
 $8$-approximation. 
 
\subsection{$C$-approximation for $4$-sided range search}
 
 Now we will prove Theorem~\ref{thm::three-sided-color}(B). 
 Again we will use the framework of Theorem~\ref{thm::main-1}.
  It is straightforward to obtain 
a data structure with $\S_{\capp}=O(n\log n)$, $\Q_{\capp}=O(\log n)$ and 
$C=16$. Simply build a binary range tree on the $y$-coordinates of $S$ and at each node 
build an instance of Lemma~\ref{lem:3-sided-c} based on the points 
in its subtree. Given a $4$-sided query rectangle $q$, it can be broken 
down into two $3$-sided query rectangles. Shi and Jaja \cite{sj05} 
presented a reporting structure with $\S_{\rep}=O(n\log n)$ 
and $\Q_{\rep}=O(\log n)$. Plugging in these values into 
Theorem~\ref{thm::main-1} proves Theorem~\ref{thm::three-sided-color}(B).

\begin{remark}\label{rem:ac-3-sided}
\normalfont As discussed in Remark~\ref{rem:ac-3d-dom}, the technique of 
\cite{ah08} can be adapted to answer a colored approximate counting query. 
For colored $3$-sided range search in $\IR^2$, plugging in $S(n)=O(n)$ 
and $Q(n)=O(\log n)$~\cite{m85} leads to a data structure of size $O(n\log^2n)$ 
and query time $O(\log^2n)$. 
For colored $4$-sided range search in $\IR^2$, plugging in $S(n)=O(n\log n)$ 
and $Q(n)=O(\log n)$~\cite{bcko08} leads to a data structure of size $O(n\log^3n)$ 
and query time $O(\log^2n)$ (the structure of Chazelle~\cite{c86} can be used 
to obtain slightly better space). 
\end{remark}

\section{Applications of Reduction-II}\label{sec:appl-sec-red}

In this section we present a few applications of reduction-II. 
For the colored problems discussed in this section, their exact counting structures are expensive~\cite{gjs04,krsv07}. 
\begin{theorem} \label{thm::halfspace-color} 
Consider the following three colored geometric settings:
\begin{enumerate}
\item {\bf Colored halfplane range search}, 
where the input is a set of $n$ colored points in $\IR^2$ 
and the query is a halfplane. 
There is a data structure 
of $O(n)$  size which can answer the approximate counting query in 
$O\left(\frac{1}{\vare^2}\cdot\log n\cdot\log(\log_{1+\vare} |{\cal C}|)\right)$   
  time.

\item {\bf Colored halfspace range search  in $\IR^3$}, 
where the input is a set of $n$ points in $\IR^3$ and the query is a halfspace. 
There is a data structure 
of $O(n\log n)$  size which can answer the approximate counting query in 
$O\left(\frac{1}{\vare^{2}}\log^3n\cdot \log(\log_{1+\vare} |{\cal C}|)\right)$
 time. 

\item  {\bf Colored orthogonal range search  in $\IR^d$}, 
where the input is a set of $n$ points in $\IR^d$ and the query is an axis-parallel 
rectangle. There is a data structure 
of $O(n\log^d n)$  size which can answer the approximate counting query in 
$O\left(\frac{1}{\vare^{2}}\cdot\log^{d+1}n \cdot\log(\log_{1+\vare} |{\cal C}|) \right)$
  time. 
\end{enumerate}
\end{theorem}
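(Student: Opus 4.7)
The plan is to derive all three bounds as direct instantiations of Reduction-II (Theorem~\ref{thm:accq}). Once a colored reporting structure is available for each setting with the appropriate $\S_\rep$ and $\Q_\rep$, Theorem~\ref{thm:accq} hands us the approximate-counting bounds automatically, provided the two side conditions are met: (i) $\S_\rep(n)$ is converging, and (ii) the setting is polynomially bounded. Convergence is immediate in every case since $\S_\rep(n)$ is of the form $n\cdot\text{polylog}(n)$, and for a partition $n=\sum_i n_i$ we have $\sum_i n_i \log^d n_i \leq n\log^d n$. Polynomial boundedness is also standard: for halfplanes (resp. halfspaces in $\IR^3$), duality shows there are only $n^{O(1)}$ combinatorially distinct query outcomes on $S$; for axis-aligned rectangles in $\IR^d$ the number of distinct outcomes is $O(n^{2d})$.

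For part (1), the plan is to take a known colored halfplane reporting structure with $\S_\rep(n)=O(n)$ and $\Q_\rep(n)=O(\log n)$ (such a structure follows e.g.\ from reducing colored halfplane reporting to dominance reporting in $\IR^3$ and applying a linear-space reporting data structure). Substituting into (\ref{eqn:insensitive}) yields space $O(n)$ and query time $O\!\left(\vare^{-2}\log n \cdot \log(\log_{1+\vare}|{\cal C}|)\right)$, matching the claim.

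For part (2), the plan is to use a colored halfspace reporting structure in $\IR^3$ with $\S_\rep(n)=O(n\log n)$ and $\Q_\rep(n)=O(\log^3 n)$; such a bound can be obtained by combining shallow cuttings with a colored reporting reduction. Plugging into Theorem~\ref{thm:accq} gives the stated $O(n\log n)$ space and $O\!\left(\vare^{-2}\log^3 n \cdot \log(\log_{1+\vare}|{\cal C}|)\right)$ query time. For part (3), I will use a standard $d$-level colored range tree (each level strips a coordinate via a secondary colored $(d-1)$-dimensional reporting structure), giving $\S_\rep(n)=O(n\log^d n)$ and $\Q_\rep(n)=O(\log^{d+1} n)$; substitution in Theorem~\ref{thm:accq} completes the proof.

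The only step that requires care rather than mechanical substitution is verifying that colored \emph{reporting} (not merely emptiness) structures with the quoted space and query bounds are actually in the literature for each of the three problems, since Reduction-II is built around reporting. I expect this to be the main obstacle, but it is a literature check rather than a new construction: once each reporting triple $(\S_\rep,\Q_\rep)$ is cited, the convergence and polynomial-boundedness checks above make Theorem~\ref{thm:accq} directly applicable, and the query-time expression in (\ref{eqn:insensitive}) specializes to each of the three claimed bounds.
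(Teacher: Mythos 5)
Your high-level strategy — instantiate Theorem~\ref{thm:accq} for each of the three settings — is indeed what the paper does. But the concrete instantiations you propose are not available, and this is more than a ``literature check.'' For parts (2) and (3) you assume colored reporting structures whose query time has the clean form $O(\Q_\rep(n) + \kappa)$: specifically $O(\log^3 n + \kappa)$ with $O(n\log n)$ space for halfspaces in $\IR^3$, and $O(\log^{d+1}n + \kappa)$ with $O(n\log^d n)$ space for $d$-dimensional orthogonal ranges. No such structures are known. A range tree over colored points does \emph{not} give $O(\Q_\rep + \kappa)$ colored reporting, because a color can appear in several canonical nodes and the secondary structures may report it repeatedly; deduplicating is exactly the hard part of colored reporting. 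The paper instead takes a standard (uncolored) emptiness structure with $M(n)$ space and $Q(n)$ query time and runs it through the Gupta--Janardan--Smid transformation (\cite{gjs04}, Section~1.3.4), obtaining a colored \emph{reporting} structure with $O(M(n)\log n)$ space and query time $O(Q(n) + \kappa\, Q(n)\log n)$ --- note the $O(Q(n)\log n)$ cost \emph{per reported color}, not $O(1)$.

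This mismatch is not cosmetic. Theorem~\ref{thm:accq} as stated assumes unit per-output cost, so its query bound $O\bigl((\Q_\rep(n) + \vare^{-2}\log n)\log(\log_{1+\vare}|{\cal C}|)\bigr)$ cannot be invoked directly. The paper explicitly reworks the query formula to
\[
O\Bigl(\bigl(Q(n) + (\vare^{-2}\log n)\,Q(n)\log n\bigr)\cdot\log(\log_{1+\vare}|{\cal C}|)\Bigr),
\]
where $Q(n)$ is the underlying standard query time ($\log^{d-1}n$ for range trees, $\log n$ for halfspace emptiness in $\IR^3$~\cite{ac09b}). Plugging these $Q(n)$ values in, not your $\Q_\rep$ values, is what produces the $\vare^{-2}\log^{d+1}n$ and $\vare^{-2}\log^3 n$ factors. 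Your proposal never addresses this; if you literally plug your $\Q_\rep$ values into equation~(\ref{eqn:insensitive}), you are silently presupposing structures that would in fact yield \emph{stronger} bounds than the theorem claims, which is a hint that the presupposition is false. Finally, for part (1) you do end up with the right $(\S_\rep,\Q_\rep) = (O(n),O(\log n))$, but the route you sketch (``reduce colored halfplane reporting to dominance reporting in $\IR^3$'') is not a valid reduction for the \emph{colored} version of the problem; the paper reduces colored halfplane reporting to the segments-below-point problem via~\cite{gjs04} and then uses the $O(n)$-space, $O(\log n + \kappa)$-time structure of Agarwal, Cheng, Tao and Yi~\cite{acty09}.
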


\vgap
\noindent
{\bf Colored orthogonal range search in $\IR^d$.} 
First consider the {\it standard} orthogonal range emptiness query in 
$\IR^d$ ($d\geq 2$).  Using range trees, 
this problem can be solved using $M(n)=O(n\log^{d-1}n)$ space and 
$\Q_{\rep}(n)=O(\log^{d-1}n)$ query time. Using this structure,
the colored orthogonal range reporting problem in $\IR^d$ can be answered 
in $O(\Q_{\rep}(n) + \kappa \Q_{\rep}(n)\log n)$ query time 
using a structure of size $O(M(n)\log n)$. Here $\kappa$ is the 
number of colors reported (see Section~$1.3.4$ of \cite{gjs04} for the 
details of this transformation). 

By applying Theorem~\ref{thm:accq}, the space occupied by the
 approximate counting structure will be 
$O(n\log^{d}n)$.
In Theorem~\ref{thm:accq}, we assumed that the query time of the colored 
reporting structure can be expressed as $O(\Q_{\rep} + \kappa)$, whereas 
for this problem the query time is being expressed as $O(\Q_{\rep} + \kappa \Q_{\rep}\log n)$. 
Therefore, equation~\ref{eqn:insensitive} of the query time $\Q_{\eapp}(n)$ in Theorem~\ref{thm:accq} can be rewritten as 

\[ O\bigg(\big(\Q_{\rep}(n) + (\vare^{-2}\log n) \Q_{\rep}(n)\log n \big)\cdot\log(\log_{1+\vare} |{\cal C}|)\bigg)\]

Plugging in the value of $\Q_{\rep}(n)$  into the above expression, we get  
\[\Q_{\eapp}(n) =O\bigg(\big(\log^{d-1}n + \vare^{-2}\log^{d+1} n\big)\cdot\log(\log_{1+\vare} |{\cal C}|)\bigg) 
= O\bigg( \vare^{-2}\cdot\log^{d+1}n \cdot\log(\log_{1+\vare} |{\cal C}|) \bigg) \]

\vgap
\noindent
{\bf Colored halfspace range search in $\IR^3$.}  There exists an 
$O(n\log^2n)$ space 
reporting data structure for this problem which can answer the query in 
$O(n^{1/2+\delta}+ \kappa)$ time \cite{gjs04}. But we will not use this structure, 
since for our purpose $\kappa=O(\vare^{-2}\log n)$ and the transformation technique 
used for the colored orthogonal range search problem 
will give us a reporting data structure with better bounds. 
Again, first consider the {\it standard} halfspace range emptiness query in 
$\IR^3$. This problem can be solved using $M(n)=O(n)$ space and 
$\Q_{\rep}(n)=O(\log n)$ query time \cite{ac09b}. Using this structure,
the colored halfspace range reporting problem in $\IR^3$ can be answered 
in $O(\Q_{\rep}(n) + \kappa \Q_{\rep}(n)\log n)=O(\vare^{-2}\log^3n)$ query time 
using a structure of size $O(M(n)\log n)=O(n\log n)$.

Applying Theorem~\ref{thm:accq}, the space occupied by the
 approximate counting structure will be 
$O(n\log n)$. The query time will be
\[ O\bigg(\big(\Q_{\rep}(n) + (\vare^{-2}\log n) \Q_{\rep}(n)\log n \big)\cdot\log(\log_{1+\vare} |{\cal C}|)\bigg)
= O\bigg( (\vare^{-2}\log^3n)\cdot \log(\log_{1+\vare} |{\cal C}|)\bigg)\]

\vgap
\noindent
{\bf Colored halfplane range search.}  In \cite{gjs04} 
a reduction is presented from this problem to the {\it segments-below-point} problem: 
Given a set of $n$ segments in the plane, report all the segments
hit by a vertical query ray. Recently, this segments-below-point has been solved 
by Agarwal, Cheng, Tao and Yi \cite{acty09} in the context of designing data structures 
for uncertain data. They present an $O(n)$ space data structure to solve the query in 
$O(\log n + \kappa)$ time. This implies a solution for colored halfplane range reporting 
 with the same bounds. Plugging in this result into Theorem~\ref{thm:accq}, we 
obtain an $O(n)$ size data structure which can answer the approximate counting query in 
$O(\vare^{-2}\log n\cdot \log(\log_{1+\vare} |{\cal C}|))$ time.

\section{Conclusion and Future Work}
In this work, we built optimal and near-optimal approximate counting data structures for several 
colored and uncolored (i.e., standard) geometric settings.
We finish by presenting a few open problems:
\begin{enumerate}
\item We do not discuss the  preprocessing time in this paper since most of the solutions 
in this work are based on verifying if a sample is ``suitable". It is not known how 
to efficiently verify if a sample  is suitable. 
\item In a colored orthogonal range search in $\IR^1$ problem, the input is a set of 
$n$ colored points on the real-line and the query is an interval. Is it possible to 
build a linear-space data structure which can answer the approximate counting query 
in $\Oe(1)$ time in the word-RAM model?  
\end{enumerate}

\vgap
\noindent
{\bf Acknowledgements.} I am thankful to Prof. Sariel Har-Peled, Prof. Ravi Janardan, Yuan Li,
Sivaramakrishnan Ramamoorthy, Stavros Sintos, Jie Xue for fruitful 
discussions on these problems and for comments on the previous drafts. I would also like to 
thank the anonymous referees whose comments helped in immensely improving 
the content and the presentation of the paper.

\bibliographystyle{plain}
\bibliography{./ref}

\end{document}